\documentclass[%
 reprint, 
superscriptaddress,
 amsmath,amssymb,
 aps, physrev,
longbibliography
]{revtex4-2}

\usepackage{graphicx}    % 图片支持
\usepackage{amsmath, amsthm, amssymb,amsfonts, graphicx, subfigure,cases, threeparttable, booktabs}     % 数学公式
\usepackage{enumitem}
\usepackage{braket}      % 狄拉克符号支持 < | >
\usepackage{tikz}        % 绘
\usetikzlibrary{arrows.meta}
\usepackage{textpos}
\usepackage{cancel} 
\usepackage[normalem]{ulem} 
\usetikzlibrary{calc,shapes.geometric} % Required for ellipse nodes

\newcommand{\supp}{\text{supp}}

\newtheorem{theorem}{Theorem} 

\newtheorem{example}{Example}
\newtheorem{corollary}[theorem]{Corollary}
\newtheorem{lemma}{Lemma}
\newtheorem{proposition}[theorem]{Proposition}
\newtheorem{fact}{Fact}
\theoremstyle{definition} 
\newtheorem{definition}{Definition}

\newcommand{\mc}[1]{\mathcal{#1}}

\newcommand{\old}[1]{\textcolor{red!30}{}}

\begin{document}

% Use the \preprint command to place your local institutional report
% number in the upper righthand corner of the title page in preprint mode.
% Multiple \preprint commands are allowed.
% Use the 'preprintnumbers' class option to override journal defaults
% to display numbers if necessary
%\preprint{}

%Title of paper
\title{Quantum states supported by matroids}

% repeat the \author .. \affiliation  etc. as needed
% \email, \thanks, \homepage, \altaffiliation all apply to the current
% author. Explanatory text should go in the []'s, actual e-mail
% address or url should go in the {}'s for \email and \homepage.
% Please use the appropriate macro foreach each type of information

% \affiliation command applies to all authors since the last
% \affiliation command. The \affiliation command should follow the
% other information
% \affiliation can be followed by \email, \homepage, \thanks as well.
\author{Xiaowei Huang}
\email{huangxiaowei@gziis.org}
\affiliation{Institute of Intelligent Software, Guangzhou 325035, China}
\affiliation{Institute of Quantum Computing and Software, School of Computer Science and Engineering, Sun Yat-sen University, Guangzhou 510006, China}

\author{Fei Shi}
\email{shif26@mail.sysu.edu.cn}
\affiliation{Institute of Quantum Computing and Software, School of Computer Science and Engineering, Sun Yat-sen University, Guangzhou 510006, China}

\author{Lijun Zhang}
\email{zhanglj@ios.ac.cn}
\affiliation{Institute of Intelligent Software, Guangzhou 325035, China}
\affiliation{Key Laboratory of System Software of Chinese Academy of Sciences, Institute of Software Chinese Academy of Sciences, University of Chinese Academy of Sciences, Beijing 100045, China}

\author{Lvzhou Li}
\email{lilvzh@mail.sysu.edu.cn (Corresponding author)}
\affiliation{Institute of Quantum Computing and Software, School of Computer Science and Engineering, Sun Yat-sen University, Guangzhou 510006, China}
%\homepage[]{Your web page}
%\thanks{}
%\altaffiliation{}
%\affiliation{}

%Collaboration name if desired (requires use of superscriptaddress
%option in \documentclass). \noaffiliation is required (may also be
%used with the \author command).
%\collaboration can be followed by \email, \homepage, \thanks as well.
%\collaboration{}
%\noaffiliation

\date{\today}

\begin{abstract}

%In this work, we establish a structural correspondence between quantum states and matroid theory, revealing that fundamental notions such as entanglement and measurement can be characterized combinatorially through matroids—a concept seemingly unrelated to quantum information. 

In this work, we establish a structural correspondence between quantum states and matroid theory. This connection demonstrates that key properties of quantum states, including entanglement and measurement, can be characterized in purely combinatorial terms via matroids, despite the apparent conceptual distance between these two fields. Using this framework, we show that a matroid-supported state is genuinely entangled when its underlying matroid is connected. Moreover, a uniform superposition over all bases of a matroid is genuinely entangled if and only if the matroid is connected. We also demonstrate that a local measurement in the $Z$-basis on such a state yields another matroid-supported state, whose underlying matroid is a minor of the original one. Inspired by matroid duality, we further propose a notion of quantum state duality, uncovering a deep structural symmetry in state transformations.
\end{abstract}

% insert suggested keywords - APS authors don't need to do this
%\keywords{Quantum Entanglement, Genuine Multipartite Entanglement, Matroid, Connectivity, Multiaffine Polynomial}

%\maketitle must follow title, authors, abstract, and keywords
\maketitle

% body of paper here - Use proper section commands
% References should be done using the \cite, \ref, and \label commands
\section{Introduction}
Quantum entanglement—first brought to light in the seminal work of Einstein, Podolsky, and Rosen (EPR)~\cite{einstein1935can} in 1935—was initially discussed to highlight what they perceived as the incompleteness of quantum mechanics. Shortly thereafter, Schr\"{o}dinger~\cite{schrodinger1935discussion} coined the term Verschr\"{a}nkung (entanglement), identifying it as a defining feature of quantum theory. Decades later, Bell~\cite{bell1964einstein} formulated his celebrated inequalities, providing an experimentally testable criterion to distinguish between local hidden-variable theories and quantum mechanical predictions. Groundbreaking experiments, beginning with those of Aspect, Grangier, and Roger~\cite{aspect1982experimental} in the 1980s, confirmed the violation of Bell inequalities, firmly establishing entanglement as a physical phenomenon rather than a mere mathematical curiosity.

Today, research on quantum entanglement has become extensive and profound, encompassing its theoretical foundations, methods for characterization, and quantitative measures~\cite{horodecki2009quantum, bengtsson2017geometry}. Among the various forms of entanglement, genuine multipartite entanglement (GME) stands out as a particularly robust and useful form for quantum information processing. GME serves as a key resource in quantum computation~\cite{raussendorf2003measurement,zhao2025entanglement}, quantum communication~\cite{karlsson1998quantum,yin2020entanglement}, quantum error correction~\cite{makuta2021self,makuta2025all}, and quantum metrology~\cite{chin2012quantum,demkowicz2014using}. Recently, GME has been demonstrated in systems with up to 51 superconducting qubits, demonstrating the potential for realizing GME in even larger-scale systems~\cite{cao2023generation}. However, determining whether a given state has GME remains a fundamental and challenging problem in the field.

This work is motivated by the following questions.

\textbf{Question 1.} Some important quantum states, such as Dicke states, graph states, and hypergraph states, are all related to combinatorial structures and have significant applications. Are there other combinatorial structures worth investigating?

\textbf{Question 2.} What kind of combinatorial structure makes the entanglement of quantum states independent of their non-zero coefficients?

We provide a combinatorial explanation for several key quantum phenomena—such as entanglement and measurement—exhibited by a class of special quantum states, namely the matroid-supported states (see Definition~\ref{def:matroid-supported-state}) introduced herein. The matroid picture offers an elegant way of characterizing the multipartite entanglement of matroid-supported states. %Todo(Typos and xxx)::\remove{Thus, the connectivity of the matroid can used to detect GME.}
 
Matroid theory, originating from the pioneering works of Whitney~\cite{whitney1992abstract} and Nakasawa~\cite{nakasawa1935axiomatik}, serves as a fundamental framework in combinatorics and theoretical computer science. It abstracts essential properties such as linear independence in vector spaces and acyclicity in graphs into an elegant axiomatic structure. This abstraction not only unifies various independence phenomena but also provides a powerful combinatorial language for analyzing a wide range of problems within a unified conceptual framework. Owing to their rich structure and broad applicability, matroids have found deep connections with optimization, graph theory, coding theory, and more recently quantum information theory. 

\subsection{Results}
In this paper, we investigate quantum states through the lens of their underlying combinatorial structure---namely, matroids. We demonstrate that quantum entanglement and quantum measurement can be understood not merely algebraically but also combinatorially, by interpreting a state’s support as a collection of matroid bases (see Definition~\ref{def:matroid-supported-state} and Definition~\ref{def:matroid-base-state}). This perspective reveals a fundamental correspondence between the combinatorial geometry of a state and its GME, as well as between measurement and matroid minors. Our main results are as follows.
\begin{enumerate}
\item A matroid-supported state is genuinely entangled whenever its underlying matroid is connected. (See Theorem~\ref{th:matroid-supported-state})

\item A uniform superposition state over all bases of a matroid is genuinely entangled if and only if the corresponding matroid is connected. (See Theorem~\ref{th:matroid-base-state})

\item A local measurement in the $Z$-basis on a matroid-supported state yields another matroid-supported state whose underlying matroid is precisely a minor of the original matroid. (See Theorem~\ref{th:meaurement-minor})
\end{enumerate}

Furthermore, inspired by matroid duality, we introduce a notion of quantum state duality, uncovering a deep structural symmetry in state transformations. This dual perspective highlights how matroid operations naturally encode transformations among entangled quantum systems.

\begin{enumerate}[resume]
\item A quantum state is genuinely entangled if and only if its dual state is genuinely entangled. (See Proposition~\ref{le:dual-state-engtangled})

\item The dual state of a matroid-supported state is a matroid-supported state supported by the dual matroid of the original matroid. (See Proposition~\ref{le:dual-matroid-supported-state} and Proposition~\ref{le:dual-matroid-base-state})
\end{enumerate}

\subsection{Technique}
\textbf{Entanglement, Irreducibility, and Connectedness}.
The notions of quantum entanglement, polynomial irreducibility, and matroid connectedness share a unifying theme — indecomposability. Each concept captures, in its respective domain, the idea that a structure cannot be decomposed into simpler, independent components without losing its essential properties. From physical, algebraic, and combinatorial perspectives, these notions reflect different manifestations of dependence and connectivity within complex systems. Their mathematical forms are consistent.
\begin{center}
\begin{itemize}
\item Quantum entanglement: $|\psi\rangle$ {$\neq$} $|\psi_{1}\rangle \otimes |\psi_{2}\rangle$ 
\item Irreducible polynomial: $p(x)$ $\neq$  $p_1(x)\cdot p_2(x)$
\item Connected matroid: $M$ {$\neq$} $M_1\oplus M_2$
\end{itemize}
\end{center}

\noindent{}\textbf{Technique overview}.
To prove Theorem~\ref{th:matroid-supported-state} and Theorem~\ref{th:matroid-base-state}, we adopt an algebraic representation of a quantum state by associating it with a multiaffine polynomial, called generating polynomial, whose coefficients encode the amplitudes of the state. %The entanglement of the quantum state can then be characterized by the irreducibility of its generating polynomial A quantum state is genuinely entangled if and only if its generating polynomialis irreducible
Then, we will prove a key lemma: A quantum state is genuinely entangled if and only if its generating polynomial is irreducible (see Lemma~\ref{le:entanglement-by-irreducibility}).

Let $f(x)=\sum_{B\subseteq[n]}\alpha_B x^B$ be a homogeneous multiaffine polynomial associated with a matroid, in the sense that its support $\supp(f)$ is precisely the collection of bases of a matroid. The irreducibility of $f$ is intimately connected to the connectedness of the underlying matroid. Specifically, as demonstrated in Lemma~\ref{le:conncted-matroid-to-irreducible}, the connectedness of the matroid provides a sufficient condition for the irreducibility of the corresponding polynomial.

The overall logical structure of the proofs can thus be summarized as Figure~\ref{fig:proof-overview}.

%\begin{turnpage}
\begin{figure}[htbp]
  \centering
\begin{tikzpicture}[
    set/.style={draw, rounded corners=8pt, minimum width=4cm,
      align=center},scale=0.5
]

\node[set] (IP) at (0,0) {Irreducible polynomial};
\node[set,below of=IP] (GM) {Genuinely entangled state};
\node[set,above of=IP] (CM) {Connected matroid};

\draw[double, double distance=2pt,-{Implies}] (CM) -- (IP);
\draw[double, double distance=2pt, {Implies}-{Implies}] (GM) -- (IP);
\end{tikzpicture}    
  \caption{The proof framework for Theorem~\ref{th:matroid-supported-state} and Theorem~\ref{th:matroid-base-state}.}
  \label{fig:proof-overview}
\end{figure}
%\end{turnpage}

This correspondence establishes a deep structural analogy among combinatorial, algebraic, and quantum domains, showing that entanglement in quantum mechanics, irreducibility in algebra, and connectivity in matroid theory all express a common principle of inseparability.

\subsection{Related works}
%\noindent{}\textbf{Matroids in quantum information and quantum computing}.
Matroid is the core concept of combinatorial optimization and has wide applications in algorithm design, algorithm analysis, coding, geometry, and other fields. It also inevitably appears in quantum information and quantum computing and is gradually gaining attention. These applications include computation models~\cite{shepherd2009temporally,shepherd2010binary}, 
quantum simulating~\cite{mann2021simulating},
%error-correcting codes~\cite{sarvepalli2010local,sarvepalli2014quantum},
%(Sarvepalli and Raussendorf~\cite{sarvepalli2010local} and Sarvepalli~\cite{sarvepalli2014quantum}), 
secret sharing~\cite{sarvepalli2010matroids} and
%(Sarvepalli and Raussendorf~\cite{sarvepalli2010matroids}) 
circuit optimization~\cite{amy2014polynomial}.
%(Amy, Maslov and Mosca~\cite{amy2014polynomial}).
At the same time, quantum algorithms for matroid problems have also attracted the attention of relevant researchers~\cite{kulkarni2013query,huang2024quantum-tcs,huang2024quantum-fcs,huang2023quantum}.

In addition, Gurvits~\cite{gurvits2003classical,gurvits2004classical} introduced a quantum generalization of classical matching theory, a fundamental concept in matroid theory, to investigate the structure of quantum entanglement. In his seminal work, Gurvits demonstrated that for separable bipartite quantum states, the existence of a perfect quantum matching is equivalent to the full-rank property of the intersection of two geometric matroids. This approach establishes a profound link between matroid intersection theory and the separability problem in quantum information. In contrast, our work explores the entanglement of quantum states through a different structural lens—by characterizing GME via the connectivity of matroids rather than matroid intersections.

Another promising research direction originates from the connection between matroid theory and quantum codes, as explored by Sarvepalli~\cite{sarvepalli2010local,sarvepalli2014quantum}. This approach offers a unifying framework, which is most evident in the fact that all stabilizer states are supported by symplectic matroids. This generalization of classical matroid theory provides a powerful combinatorial lens for analyzing quantum states and may lead to new tools for code construction and classification.

\subsection{Organization}
The remainder of this paper is organized as follows. In Section~\ref{pre}, we define the notations, provide the background knowledge of quantum states and the necessary concepts from matroid theory. In Section~\ref{main-section}, we first give the definition of affine homogeneous polynomial, which is the core tool of our proofs. Then we give the proof of our results. Finally, we conclude this article in Section~\ref{conclusion}.

\section{Preliminaries}\label{pre}
\noindent\textbf{Notations}: 
Let $E$ be a finite set, $A$ and $B$ be the subsets of $E$, and $x\in E$. We define $B\backslash{}A:=\{x\in E|x\in B\;\text{and}\; x\notin A\}$. We use $|E|$, $\overline{A}$, $A+x$, and $A-x$ to denote the cardinality of $E$, the complement of $A$, the set $A\cup\{x\}$, and the set $A\backslash{}\{x\}$, respectively. We denote by $2^E$ the set of all subsets of $E$.
For a positive integer $n$, we use $[n]$ to denote the set $\{1,\dots,n\}$. For a string $x\in\{0,1\}^n, i\in[n]$, $x_i$ denotes the $i$-th bit of $x$. Label an $n$-qubit system by the elements of $[n]$. Then for each $e\in [n]$, the notation $|0_e\rangle$ and $|1_e\rangle$ indicates that the qubit labeled by $e$ is in the state $|0\rangle$ and $|1\rangle$, respectively. $\mathbb{Z}$  represents the set of integers. Let $x=\{x_i\}_{i\in[n]}$ be $n$ variables and $S\subseteq [n]$, $x^S=\prod_{i\in S}x_i$. For disjoint sets $A$ and $B$, we use $A\sqcup B$ to denote their disjoint union. 
\subsection{Backgrounds}

Let $|\psi\rangle=\sum_{S\subseteq [n]}\alpha_S|S\rangle$ be a  quantum state over an $n$-qubit system, where $|S\rangle$ denotes the computational basis state in which the $i$-th qubit is in state $|1\rangle$ if $i\in S$ and $|0\rangle$ otherwise, and $[n]=\{1,2,\cdots,n\}$.

\begin{definition}[Genuine multipartite entanglement]
A quantum state $|\psi\rangle$ is called \emph{biseparable} if there exists a non-empty proper subset $T\subsetneq [n]$ such that
\begin{equation}
 |\psi\rangle=|\psi_{T}\rangle\otimes|\psi_{[n]\backslash{}T}\rangle,     
\end{equation} 
where $|\psi_T\rangle$ and $|\psi_{[n]\backslash{T}}\rangle$ are quantum states over the qubits labeled by $T$ and $[n]\backslash{T}$, respectively. Otherwise, it is called \emph{genuinely entangled}. 
\end{definition}

\begin{definition}
The support of a quantum state $|\psi\rangle=\sum_{S\subseteq [n]}\alpha_S|S\rangle$ is defined as
\begin{equation*}
    \supp(|\psi\rangle):=\{S\subseteq[n]\;\big|\;\alpha_S\neq 0\}.
\end{equation*}
\end{definition}

\begin{example}\label{ex:support-set-state}
The well-known genuinely entangled states in quantum information and quantum computing are as follows:
\begin{enumerate}
\item Bell states: $|\Phi^\pm\rangle=\frac{1}{\sqrt{2}}(|00\rangle\pm|11\rangle)$, $|\Psi^\pm\rangle=\frac{1}{\sqrt{2}}(|01\rangle\pm|10\rangle)$.

\item GHZ states: $|\text{GHZ}_n\rangle=\frac{1}{\sqrt{2}}(|0\rangle^{\otimes n}+|1\rangle^{\otimes n})$ with $n\geq 3$ being an integer.

\item W states: $|W_n\rangle = \frac{1}{\sqrt{n}}\sum_{i=1}^{n}|0\cdots 01_i0\cdots0\rangle$ with $n\geq 3$ being an integer, where $1_i$ denotes that the $i$-th qubit is in state $1$.

\item Dicke states: $|D_k^n\rangle=\frac{1}{\sqrt{\binom{n}{k}}}\sum_{S\subseteq [n]:|S|=k}|S\rangle$  with $n, k$ being positive integers and $1\leq k\leq n-1$.
\end{enumerate}
Their supports are:
\begin{enumerate}
\item  $\supp(|\Phi^\pm\rangle) = \{\emptyset,\{1,2\}\}$, $\;\supp(|\Psi^\pm\rangle) = \{\{1\},\{2\}\}$.
\item $\supp(|\text{GHZ}_n\rangle)=\{\emptyset,\{1,2,\dots,n\}\}$.
\item $\supp(|W_n\rangle)=\{\{i\}:i\in[n]\}$.
\item $\supp(|D_k^n\rangle)=\{S\subseteq[n]:|S|=k\}$.
\end{enumerate}
\end{example}

GME is a subtle and multifaceted phenomenon, whose mathematical characterization reveals a complex interplay between amplitude coefficients and their underlying structure. When the effect of these coefficients is disregarded, GME typically reflects the combinatorial  constraints inherent in quantum systems. This perspective naturally raises the question: What kind of nontrivial combinatorial structures can be used to explain that quantum states are genuinely entangled?

In Example~\ref{ex:support-set-state}, these representative and well-studied genuinely entangled states exhibit two combinatorial structures such that their GME is independent of their coefficients. We observe that apart from the Dicke states, the supports of the  quantum states correspond to partitions of $[n]$. This partition structure is self‑evident and provides an intuitive explanation for the genuine entanglement of the state. Its genuine entanglement is independent of the specific amplitude coefficients.
Here we allow the partition to have an empty set (which represents state $|0\rangle^{\otimes n}$).
Based on a partition, we can easily construct a new structure by adding some new subsets to the partition so that the GME of a quantum state with this underlying structure is still independent of the coefficients. However, these structures are a bit ``arbitrary" and difficult to characterize accurately, so this is not the structure we are concerned with. Here is an example illustrating this ``arbitrary" structure.

\begin{example}
Let $\mc{P}=\{\{1,2\},\{3,4\},\{5\}\}$ be a partition of $\{1,2,3,4,5\}$, and let $S=\{1,3,5\}$ be a subset of $\{1,2,3,4,5\}$ that is not an element of $\mc{P}$. Any quantum state with support $\mc{P}\cup \{S\}$ is genuinely entangled.
\end{example}

On the other hand, in Example~\ref{ex:support-set-state}, from the expressions of these quantum states $|\Psi^\pm\rangle(=\frac{1}{\sqrt{2}}(|01\rangle\pm|10\rangle))$, $|W_n\rangle$ and $|D^n_k\rangle$, we can see that each element in their support is equal in size. This suggests a subtler combinatorial structure, \emph{matroid}, supporting these quantum states. We first define a class of quantum states related to matroids, which we call matroid-supported states.

\begin{definition}[Matroid-supported state]\label{def:matroid-supported-state}
A quantum state $|\psi\rangle=\sum_{S\subseteq {[n]}}\alpha_S|S\rangle$ is called a \emph{matroid-supported state} if its support $\supp(|\psi\rangle)$ is the collection of bases of a matroid $M$ on the ground set $[n]$, where  $M$ is called its supporting matroid.
\end{definition}

At the same time, motivated by two well-studied classes of quantum states, graph states~\cite{hein2004multiparty} %(Hein, Eisert and Briegel~\cite{hein2004multiparty}) 
and hypergraph states~\cite{rossi2013quantum}, %(Rossi, Huber, Bru{\ss} and Macchiavello~\cite{rossi2013quantum}), 
we consider, for a given matroid $M=([n],\mc{B})$, the quantum state defined as the uniform superposition over all bases of the given matroid $M$. This is a special type of matroid-supported state. This construction naturally encodes the combinatorial structure of a matroid into a quantum state called matroid-base state. Its definition is as follows. 

\begin{definition}[Matroid-base state]\label{def:matroid-base-state}
Given a matroid $M=([n],\mc{B})$, the quantum state defined by
\begin{equation}
 |M\rangle=\frac{1}{\sqrt{|\mc{B|}}}\sum_{B\in\mc{B}}|B\rangle ,   
\end{equation}
is called a \emph{matroid-base state} associated to matroid $M$.
\end{definition}

Formally, there is a  well-known state called Bethe state, which includes the matroid-supported states. The Bethe state~\cite{bethe1997theory}, first introduced in 1931 to solve the Heisenberg model, is an exact many-body state describing $M$ excitations on an $L$-site lattice. It can be written as:
\begin{equation}
|\psi\rangle=\sum_{S\subseteq 2^{[L]}:|S|=M}\alpha_S|S\rangle
\end{equation}
where $S$ represents the location of the excitation of size $M$, and the coefficients $\alpha_S$ are constructed via the Bethe Ansatz to satisfy the integrability constraints of the system.

The relationship between the quantum states studied in this paper and the Dicke, Bethe, and GME states is shown in Figure~\ref{fig:bethe-states}.

This paper does not address how to determine whether a given state is a matroid-supported state. Throughout this paper, we assume that we already know it is a matroid-supported state.

%\begin{turnpage}
\begin{figure}[htbp]
  \centering
\begin{tikzpicture}[scale=0.8]
\draw[thick] (0,0) ellipse (4.5cm and 3cm);
\node[] at (1.7, 1.5) {Quantum States};

\draw[thick, blue!80] (-1.5, 0) ellipse (2.5cm and 2cm);
\node[blue!80] at (-1.5, 1.5) {\textbf{Bethe States}};

\draw[thick, orange] (-1.2, -0.2) ellipse (2cm and 1.4cm);
\node[orange] at (-2.6, -0.5) {MSS};

\draw[thick, green!60!black] (-0.8, -0.3) ellipse (1.3cm and 1cm);
\node[green!60!black] at (-1.5, -0.6) {MBS};

\draw[thick, red, fill=red!10, fill opacity=0.5] (-0.3, -0.3) ellipse (0.5cm and 0.5cm);
\node[red, font=\bfseries] at (-0.3, -0.3) {DS};

\draw[thick, purple] (1.2, -0.4) ellipse (2.2cm and 1.5cm);
\node[purple] at (2.2, -0.4) {\textbf{GME States}};

\node[anchor=north west, align=left] at (-2, -3) {
    \textcolor{orange}{MSS - Matroid-Supported States}\\
    \textcolor{green!60!black}{MBS - Matroid-Base States}\\
    \textcolor{red}{DS - Dicke States ($|D^k_n\rangle$ with $k\neq 0,n$)}
};
\end{tikzpicture}
  \caption{The classification of quantum states discussed in this article. Their inclusion relationships are shown in the figure. By Corollary~\ref{co:dicke}, Dicke states $|D^k_n\rangle$ with $k\neq 0,n$ are genuinely entangled. As can be seen from Example~\ref{ex:disconnected-matroid-shows-GE}, there exist matroid-base states that are not GME. 
  }
  \label{fig:bethe-states}
\end{figure}
%\end{turnpage}

\subsection{Matroid Theory}\label{sec:matroid}
Here we give some basic definitions and concepts on matroids. 
Matroid theory is established as a generalization of linear algebra and graph
theory. Some concepts are similar to those of linear algebra or graphs.
We refer the reader to Oxley~\cite{OUP/oxley11} and Welsh~\cite{welsh1976matroid} for more details about matroid theory.

\begin{definition}[\textbf{Matroid}]
  \label{def:matroid}
  A \emph{matroid} is a combinational object defined by the tuple
  $M=([n],\mathcal{I})$ on the finite ground set $[n]$ and $\mathcal{I}\subseteq 2^{[n]}$ such
  that the following properties hold:
  \begin{enumerate}
  \item[\textbf{I0}.] $\emptyset \in \mathcal{I}$;
  \item[\textbf{I1}.] If $A'\subseteq A$ and $A\in\mathcal{I}$, then $A'\in\mathcal{I}$;
  \item[\textbf{I2}.] For any two sets $A,B\in\mathcal{I}$ with $|A|<|B|$, there 
  exists an element $x\in B\backslash{A}$ such that $A+x\in\mathcal{I}$.
  \end{enumerate}
\end{definition}
The members of $\mathcal{I}$ are called \emph{independent} set of $M$.
The subsets of $[n]$ not belonging to $\mathcal{I}$ are called \emph{dependent}.
A \emph{base} of $M$ is a \emph{maximal independent} set, and the collection of bases is denoted by $\mc{B}(M)$ or $\mc{B}$. The property \textbf{I2} implies that the elements of $\mc{B}$ have the same cardinality, which is the rank of $M$ (see Definition~\ref{def:rank}). A \emph{circuit} of $M$ is a \emph{minimal dependent} set, and the collection of circuits is denoted by $\mc{C}(M)$ or $\mc{C}$.
 
\begin{example} Three well-known matroids.
\begin{itemize}
    \item \textbf{Uniform matroid}. For integer $r$ with $0\leq r\leq n$, let $U_{r,n}=([n],\mc{I})$, where $\mc{I}=\{I\subseteq [n]\;|\;|I|\leq r\}$. $U_{r,n}$ is called \emph{uniform matroid}.
    \item \textbf{Vector matroid}.  Let $\mathbb{F}$ be a field and $A\in\mathbb{F}^{m\times n}$ be a matrix over $\mathbb{F}$. Let 
  \begin{align*}\mc{I}=&\{I\subseteq[n]\;|\;\text{The column vectors of}\;A\;  \text{indexed by}\;I~\\
  &\text{are linearly independent over}~ \mathbb{F}\}.  
  \end{align*}
    Then $M[A]=([n],\mc{I})$ is a matroid, called \emph{vector matroid}.
    \item \textbf{Graphic matroid}. Let $G=(V,E)$ be an undirected graph. Let
    \begin{equation*}
        \mc{I}=\{I\subseteq E\;|\;I\;\textbf{is a forest of } G.\}
    \end{equation*}
    Then $M[G]=(E,\mc{I})$ is a matroid, called \emph{graphic matroid}.
\end{itemize}
\end{example}

\begin{definition}[\textbf{Rank}]
  \label{def:rank}
  The \emph{rank function} of a matroid $M=([n],\mathcal{I})$ is the function $r:2^{[n]}\rightarrow\mathbb{Z}$ defined by
  \begin{equation}
   r(A) = \max\{|X|\;\big|\;X\subseteq A,X\in\mathcal{I}\}\;\;\;\;\;\;(A\subseteq [n]).
   \end{equation}
\noindent{}The rank of $M$, denoted by $r(M)$, is $r([n])$. 
\end{definition}
For any $e\in[n]$ with $r(\{e\}) = 0$, we say that $e$ is a \emph{loop} of $M$. For any two non-loop elements $e,f\in[n]$ with $r(\{e,f\})=1$, we say that $e,f$ are \emph{parallel}. A matroid is called \emph{simple matroid} if it does not have loops or parallel elements.

\begin{lemma}[Base Axioms]\label{th:base-axioms}
A non-empty collection $\mc{B}$ of subsets of $[n]$ is the collection of bases of a matroid on $[n]$ if and only if it satisfies the following condition:
\begin{itemize}
\item[(B1)] If $B_1,B_2\in\mathcal{B}$ and $x\in B_{1}\backslash{}B_{2}$, then there exists $y\in B_{2}\backslash{}B_{1}$ such that $B_{1}+y-x\in\mathcal{B}$.\label{B1}
\end{itemize}
Such a matroid determined by the collection of bases $\mc{B}$ is usually represented by $([n],\mc{B})$.
\end{lemma}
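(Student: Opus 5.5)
We have to prove Lemma~\ref{th:base-axioms}: a non-empty family $\mc{B}\subseteq 2^{[n]}$ is the collection of bases of a matroid on $[n]$ if and only if it satisfies (B1). We prove the two directions separately, working throughout with the independence axioms \textbf{I0}--\textbf{I2} of Definition~\ref{def:matroid}.

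\emph{Necessity.} Let $\mc{B}$ be the bases of a matroid $([n],\mc{I})$; it is non-empty because $\emptyset\in\mc{I}$ and $[n]$ is finite. By \textbf{I2}, all bases have the same size $r$. Take $B_1,B_2\in\mc{B}$ and $x\in B_1\backslash B_2$. Then $B_1-x\in\mc{I}$ by \textbf{I1}, and $|B_1-x|<|B_2|$. Axiom \textbf{I2} gives $y\in B_2\backslash(B_1-x)$ with $B_1-x+y\in\mc{I}$. Since $x\notin B_2$, we have $y\neq x$, so $y\in B_2\backslash B_1$. The set $B_1-x+y$ is independent of size $r$. Any independent proper superset of it could be extended, by repeated use of \textbf{I2}, to an independent set of size $>r$ and hence to a base of size $>r$, which is impossible. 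So $B_1-x+y$ is maximal, i.e.\ a base.

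\emph{Sufficiency.} Let $\mc{B}$ be non-empty and satisfy (B1). Define
\[
\mc{I}=\{I\subseteq[n] : I\subseteq B \text{ for some } B\in\mc{B}\}.
\]
Axioms \textbf{I0} and \textbf{I1} hold immediately. Two facts remain.

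\begin{enumerate}
\item[(a)] \emph{All members of $\mc{B}$ have equal size.} Suppose $|B_1|<|B_2|$, and among all such pairs choose one minimizing $|B_2\backslash B_1|$. Since $|B_1|<|B_2|$, the set $B_2\backslash B_1$ is non-empty; pick $x$ in it. Applying (B1) to $(B_2,B_1,x)$ gives $y\in B_1\backslash B_2$ with $B_2'=B_2-x+y\in\mc{B}$. Then $|B_2'|=|B_2|$ and $|B_2'\backslash B_1|$ is smaller, contradicting minimality.
\item[(b)] \emph{The maximal members of $\mc{I}$ are exactly $\mc{B}$.} This follows from (a), since no base can properly contain another base of the same size.
\end{enumerate}

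It remains to verify \textbf{I2}, which is the main obstacle. Let $I,J\in\mc{I}$ with $|I|<|J|$, and write $I\subseteq B_I$, $J\subseteq B_J$. Among all admissible choices, pick $B_I\supseteq I$ and $B_J\supseteq J$ minimizing $|B_J\backslash B_I|$.

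First we show $B_J\backslash(B_I\cup J)=\emptyset$. Suppose $y$ lies in this set. Apply (B1) to $(B_J,B_I,y)$: this yields $z\in B_I\backslash B_J$ with $B_J-y+z\in\mc{B}$. This new base still contains $J$, because $y\notin J$, and it is strictly closer to $B_I$, contradicting minimality.

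Next, if some $x\in J\backslash I$ lies in $B_I$, then $I+x\subseteq B_I$ and we are done. Otherwise $J\backslash I\subseteq B_I^c$. Now count:
\[
|B_I\backslash (I\cup B_J)| = |B_I|-|I|-|(B_I\backslash I)\cap B_J| .
\]
Since $B_J\subseteq B_I\cup J$, we have $(B_I\backslash I)\cap B_J \supseteq B_J\backslash J$ minus parts of $I$. Combining $|B_I|=|B_J|$ from (a) with $|I|<|J|$, this yields an element $x\in B_I\backslash(I\cup B_J)$. Apply (B1) to $(B_I,B_J,x)$: we obtain $y\in B_J\backslash B_I$ with $B_I-x+y\in\mc{B}$. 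This base still contains $I$ and reduces $|B_J\backslash B_I|$, again a contradiction.

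The delicate point is this final counting step, which needs care about how $B_J$ sits inside $B_I\cup J$. If the bookkeeping fails, a fallback is to first prove the symmetric exchange form of (B1) and then run the same minimal-counterexample argument.
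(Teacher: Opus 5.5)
Your overall route is the standard textbook one. The paper gives no proof and defers to Oxley and Welsh. The structure is: necessity from \textbf{I1}--\textbf{I2}; equicardinality of $\mathcal{B}$ via a pair minimizing $|B_2\backslash B_1|$; $\mathcal{I}$ defined as all subsets of members of $\mathcal{B}$; and \textbf{I2} via a minimal choice of $B_I\supseteq I$, $B_J\supseteq J$. In your version of the last part you minimize $|B_J\backslash B_I|$, so every exchange contradicts minimality. This is a legitimate minor variant of the textbook argument, which minimizes a slightly different difference and uses the standing assumption that no $I+e$ is independent. Necessity, (a), (b) and your first exchange step are all correct.

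The gap is exactly the counting step you flagged, and as written it does not work. To conclude
$|B_I\backslash(I\cup B_J)|=|B_I|-|I|-|(B_I\backslash I)\cap B_J|>0$
you need an \emph{upper} bound on $|(B_I\backslash I)\cap B_J|$. Your sentence instead asserts a containment of the form $(B_I\backslash I)\cap B_J\supseteq\cdots$, which only bounds it from below. It also invokes $B_J\subseteq B_I\cup J$, which is not the relevant fact.

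What you need is the case hypothesis itself. In the `otherwise' case $J\cap B_I\subseteq I$, so no element of $(B_I\backslash I)\cap B_J$ lies in $J$; that is, $(B_I\backslash I)\cap B_J\subseteq B_J\backslash J$. Since $J\subseteq B_J$ and $|B_I|=|B_J|$ by (a),
\[
|B_I\backslash(I\cup B_J)|\;\ge\;(|B_I|-|I|)-(|B_J|-|J|)\;=\;|J|-|I|\;>\;0 .
\]
With this inequality your exchange from $B_I$ closes the argument. It keeps $I$, since $x\notin I$, and it lowers $|B_J\backslash B_I|$ by one, since $x\notin B_J$ and $y\in B_J\backslash B_I$. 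Note that the first step, $B_J\backslash(B_I\cup J)=\emptyset$, is then not used at all. The fallback to symmetric exchange is also unnecessary, and that property is a harder theorem than the one being proved.
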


\begin{lemma}[Duality]\label{le:matroid-duality}
Let $M=([n],\mc{B})$ be a matroid and $\mc{B}^*$ be $\{[n]\backslash{}B\;\big|B\in\mc{B}(M)\;\}$. Then $\mc{B}^*$ is the collection of bases of a matroid $M^*$ on $[n]$.
\end{lemma}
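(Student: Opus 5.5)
The approach is to verify the base-exchange axiom (B1) of Lemma~\ref{th:base-axioms} directly for the complementary collection $\mc{B}^*=\{[n]\backslash B \mid B\in\mc{B}\}$. This collection is non-empty because $\mc{B}$ is. So once (B1) holds for $\mc{B}^*$, Lemma~\ref{th:base-axioms} gives a matroid $M^*=([n],\mc{B}^*)$.

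First I would take $B_1^*=\overline{B_1}$ and $B_2^*=\overline{B_2}$ in $\mc{B}^*$, together with an element $x\in B_1^*\backslash B_2^*$. Complementation swaps the two set differences, because $\overline{B_1}\backslash\overline{B_2}=B_2\backslash B_1$. Hence $x\in B_2\backslash B_1$. The goal is to find $y\in B_2^*\backslash B_1^*=B_1\backslash B_2$ such that $\overline{B_1}+y-x\in\mc{B}^*$. Since $\overline{B_1}+y-x=\overline{B_1+x-y}$, this is the same as requiring $B_1+x-y\in\mc{B}$.

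The needed statement is therefore a \emph{symmetric-type} exchange for $\mc{B}$. Given $x\in B_2\backslash B_1$, we must find $y\in B_1\backslash B_2$ with $B_1+x-y\in\mc{B}$. This is not literally (B1), which removes an element from $B_1$ rather than adding one. I would derive it through the independent-set axioms using circuits, as follows.
\begin{enumerate}
\item Since $B_1$ is a maximal independent set and $x\notin B_1$, the set $B_1+x$ is dependent.
\item Let $C\subseteq B_1+x$ be a minimal dependent set. It must contain $x$, because $B_1$ itself is independent.
\item The set $C-x\subseteq B_1$ cannot lie inside $B_2$. Otherwise $C\subseteq B_2$, which is impossible since $B_2$ is independent. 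So there is some $y\in (C-x)\cap(B_1\backslash B_2)$.
\end{enumerate}

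It remains to show that $B_1+x-y$ is independent. Since it has $|B_1|$ elements and all bases have equal size, this makes it a base. Suppose instead that $B_1+x-y$ contains a dependent set $D$. Then $D$ must contain $x$, because $B_1-y$ is independent. I would aim for uniqueness of the circuit in $B_1+x$: two distinct minimal dependent sets $C\neq D$ inside $B_1+x$, both containing $x$, should force a contradiction. To get it, I would extend the independent set $C-x$ using \textbf{I2} and compare ranks, showing that $r(C\cup D)\le |C\cup D|-2$. This rank bound cannot hold inside $B_1+x$, which has rank $|B_1|$ and only one element beyond $B_1$.

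This uniqueness step is the main obstacle. A cleaner alternative would be to avoid circuits entirely. One could instead prove the stronger fact that (B1) implies its dual form, the symmetric exchange, by induction on $|B_1\backslash B_2|$. If that induction becomes messy, I would fall back on the rank-function route sketched above.
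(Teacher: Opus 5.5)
Your reduction is correct, and it is the standard textbook route. The paper gives no proof of its own and defers to Oxley and Welsh; Oxley proves the dual exchange property for $\mc{B}$ via the circuit in $B_1+x$, exactly as you set it up. Steps 1--3 are fine.

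\textbf{The gap.} The decisive last step, that $B_1+x-y$ is independent, is never established, and the mechanism you sketch for circuit uniqueness fails as stated. Suppose you extend $C-x$ to a maximal independent subset $J$ of $C\cup D$. Nothing forces $J$ to miss two elements. The set $(C\cup D)-x\subseteq B_1$ is itself independent and contains $C-x$, so $J$ can be all of $(C\cup D)-x$, of size $|C\cup D|-1$, and no contradiction arises. To reach $r(C\cup D)\le|C\cup D|-2$ you must instead extend $C-f$ for some $f\in C\backslash D$; $f=y$ works, since $y\notin D$. The resulting $J$ misses $y$, because otherwise $C\subseteq J$. It also misses some element of $D$, and that element differs from $y$. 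So $|J|\le|C\cup D|-2$, and I2 against the larger independent set $(C\cup D)-x$ contradicts the maximality of $J$. Submodularity would also give the bound, but you would have to prove it from I0--I2 first.

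\textbf{Uniqueness is not needed.} Take $y\in(C-x)\backslash B_2$ as in your step 3.
\begin{enumerate}
\item $C-y$ is independent, being a proper subset of a minimal dependent set. It contains $x$ and lies inside $B_1+x-y$.
\item While the current set $A\supseteq C-y$ has fewer than $|B_1|$ elements, apply I2 to $A$ and $B_1$. This yields $z\in B_1\backslash A$ with $A+z$ independent, and $z\neq y$ because $A+y\supseteq C$ is dependent. So every added element lies in $B_1-y$.
\item The process ends with an independent set of size $|B_1|$ contained in $B_1+x-y$, which has exactly $|B_1|$ elements. Hence the set equals $B_1+x-y$.
\item This set is a base. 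Any independent proper superset would have more than $|B_1|$ elements, and I2 would then enlarge $B_1$, contradicting its maximality.
\end{enumerate}
This completes the dual exchange in a few lines and makes both the uniqueness lemma and the induction fallback unnecessary.
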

We call $M^*$ the dual matroid of $M$. Obviously, the relationship between $M$ and $M^*$ is symmetrical, that is,
\begin{equation*}
    (M^*)^* = M.
\end{equation*}

The rank function of $M^*$ is called the \emph{corank} function of $M$. A \emph{cobase} of $M$ is a base of $M^*$, a \emph{cocircuit} of $M$ is a circuit of $M^*$, if $e$ is a loop of $M^*$ it is called a \emph{coloop} of $M$ and so on.

The proofs of Base Axioms and Duality can be found in \cite{OUP/oxley11,welsh1976matroid}.

\begin{definition}[\textbf{Connected Matroid}]
\label{de:non-separable-matroid}
A matroid $M = ([n],\mc{I})$ with rank function $r$ is called \emph{connected} or \emph{non-separable} if every
nonempty proper subset $A\subsetneq [n]$ satisfies
\begin{equation}
r(A)+r(E-A)>r(E). 
\end{equation}
Otherwise $M$ is called \emph{disconnected} or \emph{separable}.
\end{definition}

\begin{lemma}\label{le:connected-matroid}
A matroid $M = ([n],\mc{B})$ is \emph{connected} if and only if for every pair of distinct elements $x$ and $y$ of $[n]$, there is a circuit of $M$ containing $x$ and $y$.
\end{lemma}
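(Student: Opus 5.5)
The statement is the standard circuit characterization of connectivity. The plan is to route it through the equivalence relation $x\sim y$, which holds iff $x=y$ or some circuit of $M$ contains both $x$ and $y$. The argument has three steps:
\begin{enumerate}
\item Show that $\sim$ is an equivalence relation.
\item Show that every equivalence class $A$ satisfies $r(A)+r([n]\backslash A)=r([n])$.
\item Show that any $A$ with $r(A)+r([n]\backslash A)=r([n])$ is a union of classes.
\end{enumerate}
Given these, $M$ is connected in the sense of Definition~\ref{de:non-separable-matroid} exactly when there is a single class, which is the statement of Lemma~\ref{le:connected-matroid}.

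\textbf{Step 3 (the ``only if'' direction).} Suppose $A$ is a nonempty proper subset with $r(A)+r(\overline{A})=r([n])$. Assume some circuit $C$ meets both $A$ and $\overline{A}$. Since $C\cap A$ and $C\cap\overline{A}$ are proper subsets of $C$, they are independent. Extend them to bases $I_A$ of $A$ and $I_{\overline A}$ of $\overline{A}$. The union $I_A\cup I_{\overline A}$ has size $r(A)+r(\overline A)=r([n])$. It also spans $[n]$, because $A\subseteq \mathrm{cl}(I_A)$ and $\overline A\subseteq\mathrm{cl}(I_{\overline A})$. A spanning set of size $r([n])$ is a base. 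But this union contains $C$, a contradiction. Hence no circuit crosses $A$, so no element of $A$ is $\sim$ to an element of $\overline{A}$. Consequently, if every pair lies in a common circuit, no such $A$ exists and $M$ is connected.

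\textbf{Step 1 (transitivity).} Reflexivity and symmetry are immediate, so the only obstacle is transitivity: given circuits $C_1\ni x,y$ and $C_2\ni y,z$, we need a circuit containing $x$ and $z$. I expect this to be the main obstacle. I would prove it by induction on $|C_1\cup C_2|$, using strong circuit elimination, which follows from the circuit axioms derivable from \textbf{I0}--\textbf{I2}.

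If $z\in C_1$ or $x\in C_2$ we are done. Otherwise, $C_1\cap C_2\neq\emptyset$ since both contain $y$. Eliminating an element $w\in C_1\cap C_2$ (for instance $w=y$) with $z$ kept yields a circuit $C_3\subseteq (C_1\cup C_2)-w$ with $z\in C_3$. If $x\in C_3$ we are done. If not, $C_3$ must meet $C_1\backslash C_2$, because $C_3\not\subseteq C_2$ by minimality of circuits. We then apply induction to the pair $C_1, C_3$: they share an element of $C_1$, and $C_1\cup C_3\subsetneq C_1\cup C_2$ since it misses $w$. This follows the classical proof in Oxley, and I would cite it if full detail is not needed.

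\textbf{Step 2 (the ``if'' direction).} Suppose two elements lie in no common circuit. Then there are at least two classes. Let $A$ be one class, a nonempty proper subset. Every circuit lies entirely within a single class, by the definition of $\sim$. Take bases $I_A$ of $A$ and $I_{\overline{A}}$ of $\overline{A}$. Their union contains no circuit, since any circuit inside it would lie within $A$ or within $\overline A$, where each part is independent. So the union is independent, giving $r(A)+r(\overline A)\le r([n])$. Submodularity gives the reverse inequality, so $r(A)+r(\overline A)=r([n])$ and $M$ is disconnected. This completes the equivalence.
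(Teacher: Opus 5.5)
The paper gives no proof of this lemma; it cites Oxley and Welsh. Your three-step plan is the standard textbook route. Step 3 is correct and gives the direction ``every pair lies in a common circuit $\Rightarrow$ connected'' on its own. Step 2 is also correct, provided $\sim$ is transitive. The gap is in the transitivity induction of Step 1, and Step 2 depends on it: you need transitivity to speak of classes at all and to put your two elements in different ones. So the direction ``connected $\Rightarrow$ every pair lies in a common circuit'' is not yet established.

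\textbf{The failing step.} You eliminate $w\in C_1\cap C_2$ and recurse on $(C_1,C_3)$, claiming $C_1\cup C_3\subsetneq C_1\cup C_2$ ``since it misses $w$''. But $w\in C_1$, so $C_1\cup C_3$ contains $w$. You only have $C_1\cup C_3\subseteq C_1\cup C_2$, and equality can occur.

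Take $U_{2,5}$ on $\{x,y,a,z,t\}$, whose circuits are all $3$-sets. Let $C_1=\{x,y,a\}$, $C_2=\{y,z,t\}$ and $w=y$. Strong elimination may return $C_3=\{a,z,t\}$. This contains $z$, avoids $x$, and satisfies $C_1\cup C_3=C_1\cup C_2$. Recursing on $(C_1,C_3)$ with $w'=a$ may then return $\{y,z,t\}=C_2$. You are back at the start, so nothing forces termination.

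\textbf{The repair: a second elimination.} Suppose you have $C_3$ with $z\in C_3\subseteq(C_1\cup C_2)-w$ and $x\notin C_3$. You already know $C_3$ meets $C_1\setminus C_2$; pick $x'$ there, so $x'\neq x$.

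Apply strong circuit elimination to $C_1$ and $C_3$, removing $x'$ and keeping $x\in C_1\setminus C_3$. This gives a circuit $C_4$ with $x\in C_4\subseteq (C_1\cup C_3)-x'$.

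Since $x'\in C_1\setminus C_4$, we have $C_4\neq C_1$, so $C_4\not\subseteq C_1$. Hence $C_4$ meets $C_3\setminus C_1\subseteq C_2$, so $C_4\cap C_2\neq\emptyset$.

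Moreover $C_3\subseteq C_1\cup C_2$ and $x'\notin C_2$. Therefore $C_4\cup C_2\subseteq (C_1\cup C_2)-x'$, a strict decrease.

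Your induction on $|C_1\cup C_2|$ now applies to the pair $(C_4,C_2)$. These circuits intersect, with $x\in C_4$ and $z\in C_2$, so it yields a circuit through $x$ and $z$. With this fix, Steps 1--3 give a complete proof.
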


\begin{lemma}\label{le:connected-matroid-sum}
A matroid $M=([n],\mc{B})$ is \emph{connected} if and only if there is no proper not-empty subset $T\subsetneq[n]$ such that $M$ has a \emph{direct sum} or \emph{1-sum} expression: $M=M_1\oplus M_2$, where $M_1=(T,\mc{B}_1)$ and $M_2=([n]\backslash{}T,\mc{B}_2)$ are matroids and the $\oplus$ operation means that $\mc{B}=\{B_1\sqcup B_2\;|\;B_1\in\mc{B}_1\;\text{and}\;B_2\in\mc{B}_2\}$.
\end{lemma}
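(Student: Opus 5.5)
We must prove Lemma~\ref{le:connected-matroid-sum}: $M$ is connected (in the sense of Definition~\ref{de:non-separable-matroid}) if and only if $M$ admits no nontrivial direct-sum decomposition $M=M_1\oplus M_2$ along some nonempty proper $T\subsetneq[n]$. Write $E=[n]$. I will prove both directions in contrapositive form, so that the statement becomes: there is a nonempty proper $T$ with $r(T)+r(E\setminus T)=r(E)$ if and only if $\mathcal{B}$ splits along such a $T$. Throughout I use submodularity, $r(A)+r(E\setminus A)\ge r(E)$, which holds for every $A$. Hence failure of the strict inequality in Definition~\ref{de:non-separable-matroid} means exactly equality.

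\emph{Direct sum $\Rightarrow$ equality.} Suppose $\mathcal{B}=\{B_1\sqcup B_2: B_1\in\mathcal{B}_1,\ B_2\in\mathcal{B}_2\}$. Every independent set of $M$ lies in some base $B_1\sqcup B_2$, so its intersection with $T$ lies in a base of $M_1$. This gives $r(T)\le r(M_1)$. Conversely, each $B_1\in\mathcal{B}_1$ is contained in a base of $M$ and is therefore independent, so $r(T)=r(M_1)=|B_1|$. The same argument gives $r(E\setminus T)=|B_2|$. Since $r(E)=|B_1|+|B_2|$, equality holds and $M$ is disconnected.

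\emph{Equality $\Rightarrow$ direct sum.} Suppose $r(T)+r(E\setminus T)=r(E)$ with $T$ nonempty and proper. Set $\mathcal{B}_1=\{B\cap T: B\in\mathcal{B}\}$ and $\mathcal{B}_2=\{B\setminus T: B\in\mathcal{B}\}$; I will show that $\mathcal{B}_1$ is exactly the set of bases of the restriction $M|T$, i.e.\ the maximal independent subsets of $T$, and similarly for $\mathcal{B}_2$. For any base $B$ we have $|B\cap T|\le r(T)$ and $|B\setminus T|\le r(E\setminus T)$, while their sum is $r(E)=r(T)+r(E\setminus T)$. Both bounds are therefore tight, so $B\cap T$ is a maximal independent subset of $T$, and likewise $B\setminus T$ in $E\setminus T$. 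Conversely, every base of $M|T$ extends to a base of $M$ (by \textbf{I2}), and that base meets $T$ in exactly the original set. Thus $\mathcal{B}_1=\mathcal{B}(M|T)$ and $\mathcal{B}_2=\mathcal{B}(M|(E\setminus T))$, and both are matroid base families.

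The remaining step, which I expect to be the main obstacle, is to show that every union $X_1\sqcup X_2$ with $X_1\in\mathcal{B}_1$ and $X_2\in\mathcal{B}_2$ is a base of $M$. Since it has size $r(E)$, it suffices to show it is independent. Suppose it contains a circuit $C$. Circuits of the form $C\subseteq T$ or $C\subseteq E\setminus T$ are impossible, because $X_1$ and $X_2$ are independent. So $C$ must meet both sides, and I will rule this out using the rank equality. Let $I=X_1\sqcup X_2$ and $C_1=C\cap T$, $C_2=C\setminus T$, both nonempty. Submodularity on $C_1$ and $C_2$ gives $r(C)\le r(C_1)+r(C_2)-r(\emptyset)$; this alone is not a contradiction, so I will apply submodularity to larger sets instead. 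Take $X_1\cup C_2$, which is independent because it is a subset of $X_1\cup X_2'$ for a suitable base $B'$ containing $X_1$, and consider the closures of $X_1$ and $X_2$. The cleanest route, which I will use, is the following: $r(E)=r(T)+r(E\setminus T)$ implies $r(A)=r(A\cap T)+r(A\setminus T)$ for all $A$. This follows from submodularity applied to $A\cap T$ and $E\setminus T$ together with the two tight inequalities. Applying this to $A=I$ gives $r(I)=|X_1|+|X_2|=|I|$, so $I$ is independent. This completes the proof.
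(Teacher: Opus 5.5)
The easy direction is fine. So is the first half of the hard direction: the tightness of $|B\cap T|\le r(T)$ and $|B\setminus T|\le r(E\setminus T)$ correctly identifies $\{B\cap T\}$ and $\{B\setminus T\}$ with the base families of $M|T$ and $M|(E\setminus T)$.

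\textbf{The gap.} The gap is at the step you flagged as the main obstacle: showing that every $X_1\sqcup X_2$ is a base. The identity $r(A)=r(A\cap T)+r(A\setminus T)$ you rely on is true, but your justification does not produce it. Submodularity applied to $A\cap T$ and $E\setminus T$ gives $r\bigl((A\cap T)\cup(E\setminus T)\bigr)\le r(A\cap T)+r(E\setminus T)$, which is an upper bound. The only nontrivial half of the identity is the lower bound $r(A)\ge r(A\cap T)+r(A\setminus T)$; the other half is just subadditivity. So the inequality carrying the whole content of the hard direction is asserted, not derived.

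The abandoned circuit argument has the same problem. Choosing a base $B'\supseteq X_1$ gives no control over whether $B'\setminus T$ contains $C_2$. Independence of $X_1\cup C_2$ is essentially what you are trying to prove, so that passage should be deleted.

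\textbf{The repair.} It takes two applications of submodularity, using the hypothesis once. Put $A_1=A\cap T$ and $A_2=A\setminus T$. First apply submodularity to $A_1\cup(E\setminus T)$ and $T$ (union $E$, intersection $A_1$). Then apply it to $A$ and $E\setminus T$ (union $A_1\cup(E\setminus T)$, intersection $A_2$). This gives
\[
r\bigl(A_1\cup(E\setminus T)\bigr)\ \ge\ r(E)-r(T)+r(A_1)=r(E\setminus T)+r(A_1),
\qquad
r(A)\ \ge\ r\bigl(A_1\cup(E\setminus T)\bigr)+r(A_2)-r(E\setminus T),
\]
and combining the two yields $r(A)\ge r(A_1)+r(A_2)$.

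The first inequality can also be read off your tightness observation. Extend a maximal independent subset $I_1$ of $A_1$ to a base of $M|T$, and then to a base $B$ of $M$. Then $I_1\cup(B\setminus T)$ is an independent subset of $A_1\cup(E\setminus T)$ of size $r(A_1)+r(E\setminus T)$. This, paired with submodularity on $A$ and $E\setminus T$, may be what you meant by ``the two tight inequalities.''

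Taking $A=X_1\sqcup X_2$ now gives $r(A)=|A|=r(E)$, so $A$ is a base and $\mathcal{B}=\{X_1\sqcup X_2\}$. With this step filled in you have a complete, self-contained rank-function proof. The paper does not prove the lemma itself and simply defers to Oxley and Welsh.
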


\begin{lemma}\label{le:connected-matroid-dual}
A matroid $M = ([n],\mc{B})$ is \emph{connected} if and only if its dual matroid $M^*$ is connected.
\end{lemma}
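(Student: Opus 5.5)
We prove Lemma~\ref{le:connected-matroid-dual} using the direct-sum characterization in Lemma~\ref{le:connected-matroid-sum} together with the definition of $\mc{B}^*$ in Lemma~\ref{le:matroid-duality}. Since $(M^*)^*=M$, it suffices to prove one direction: if $M^*$ is disconnected, then $M$ is disconnected. The converse follows by applying this direction to $M^*$.

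Suppose $M^*$ is disconnected. By Lemma~\ref{le:connected-matroid-sum}, there is a nonempty proper subset $T\subsetneq[n]$ and matroids $N_1=(T,\mc{B}'_1)$, $N_2=([n]\backslash T,\mc{B}'_2)$ such that
\[
\mc{B}(M^*)=\{B'_1\sqcup B'_2\;|\;B'_1\in\mc{B}'_1,\ B'_2\in\mc{B}'_2\}.
\]
Set $M_1=N_1^*$ on $T$ and $M_2=N_2^*$ on $[n]\backslash T$, which are matroids by Lemma~\ref{le:matroid-duality} applied on those ground sets. Their bases are $T\backslash B'_1$ and $([n]\backslash T)\backslash B'_2$, respectively.

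Now take any base $B$ of $M$. Its complement $[n]\backslash B$ is a base of $M^*$, so $[n]\backslash B=B'_1\sqcup B'_2$ with $B'_1\subseteq T$ and $B'_2\subseteq[n]\backslash T$. Taking complements gives
\[
B=(T\backslash B'_1)\sqcup\bigl(([n]\backslash T)\backslash B'_2\bigr),
\]
which is a disjoint union of a base of $M_1$ and a base of $M_2$. Conversely, any such union $(T\backslash B'_1)\sqcup(([n]\backslash T)\backslash B'_2)$ is the complement of $B'_1\sqcup B'_2\in\mc{B}(M^*)$, so it lies in $\mc{B}$. Hence $\mc{B}(M)=\{B_1\sqcup B_2\;|\;B_1\in\mc{B}(M_1),\ B_2\in\mc{B}(M_2)\}$, that is, $M=M_1\oplus M_2$ with $T$ nonempty and proper. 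By Lemma~\ref{le:connected-matroid-sum}, $M$ is disconnected.

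The only point requiring care is that complementation commutes with the disjoint-union decomposition. This holds because $T$ and $[n]\backslash T$ partition the ground set, so complementing within $[n]$ amounts to complementing separately within each part. I expect no real obstacle beyond this bookkeeping.
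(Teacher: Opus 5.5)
The paper gives no proof of this lemma; it only refers to Oxley and Welsh. Your argument is correct and complete relative to Lemmas~\ref{le:matroid-duality} and~\ref{le:connected-matroid-sum}. In effect you show that duality commutes with direct sums, $(N_1\oplus N_2)^*=N_1^*\oplus N_2^*$, because complementing inside $[n]$ is the same as complementing separately inside $T$ and $[n]\backslash T$; the involution $(M^*)^*=M$ then gives the other direction.

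A common textbook alternative works directly from the rank-based Definition~\ref{de:non-separable-matroid}. The base description of $M^*$ gives the corank formula $r^*(A)=|A|+r([n]\backslash A)-r(M)$. Substituting yields $r^*(A)+r^*([n]\backslash A)-r^*([n])=r(A)+r([n]\backslash A)-r([n])$ for every $A\subseteq[n]$. So the defining strict inequality holds for $M$ exactly when it holds for $M^*$.

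What your route buys: it stays entirely at the level of bases and needs no rank computations, which fits the paper's support-based viewpoint. It is the combinatorial counterpart of the fact that $X^{\otimes n}$ preserves product structure across any cut, as in Proposition~\ref{le:dual-state-engtangled}. The price is that it relies, in both directions, on the cited equivalence in Lemma~\ref{le:connected-matroid-sum} between the rank definition and the direct-sum characterization.

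What the rank route buys: it bypasses that lemma and proves something stronger. The connectivity function $r(A)+r([n]\backslash A)-r(M)$ is itself invariant under duality, so separations of every order, not just $1$-separations, correspond between $M$ and $M^*$.
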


The proofs of Lemma~\ref{le:connected-matroid}, Lemma~\ref{le:connected-matroid-sum} and Lemma~\ref{le:connected-matroid-dual} can be found in \cite{OUP/oxley11,welsh1976matroid}.

\begin{definition}[\textbf{Deletion}]
Let $M=([n],\mc{B})$ be a matroid and $e\in[n]$. The \emph{deletion} $M\backslash{e}$ of $e$ from $M$ is the matroid $([n]-e, \mc{B}(M\backslash{e}))$, where
\begin{equation}
    \mc{B}(M\backslash{e}) = \left\{ 
    \begin{array}{l}
    \mc{B}(M)\hfill\text{if $e$ is a loop},\\
    \{B-e\;|\;B\in\mc{B}(M)\}\hfill\text{\;\;\;\;if $e$ is a coloop},\\
    \{B\in\mc{B}(M)\;|\;e\notin B\}\hfill\text{otherwise}.
    \end{array}
    \right.
\end{equation}
\end{definition}

\begin{definition}[\textbf{Contraction}]
Let $M=([n],\mc{B})$ be a matroid and $e\in[n]$. The \emph{contraction} $M/{e}$ of $e$ from $M$ is the matroid $([n]-e, \mc{B}(M/{e}))$, where
\begin{equation}
    \mc{B}(M/{e}) = \left\{ 
    \begin{array}{l}
    \mc{B}(M)\hfill\text{if $e$ is a loop},\\
    \{B-e\;|\;B\in\mc{B}(M)\}\hfill\text{if $e$ is a coloop},\\
    \{B-e\;|\;B\in\mc{B}(M)\;\text{and}\;e\in B\}\hfill\text{\;otherwise}.
    \end{array}
    \right.
\end{equation}
\end{definition}

It is not difficult to check that the deletion and contraction do actually give matroids. The operations behave well, so, for example, they commute with each other and with themselves. 

\begin{example}
An example illustrating the concepts of connectivity, deletion, contraction, and duality, all of which are extended from graph theory. See Figure~\ref{fig:matroid-concepts}.
\end{example}

\begin{figure}[htbp]
\centering
\begin{tikzpicture}%[every node/.style={circle, fill=black,inner sep=0pt,minimum size=3pt}]
% Graph G
\node[circle, fill=black,inner sep=0pt,minimum size=3pt,label=below:{$A$}] (A) at (0,0) {};
\node[circle, fill=black,inner sep=0pt,minimum size=3pt,label=below:{$B$}] (B) at ($(A)+(2.4,0)$) {};
\node[circle, fill=black,inner sep=0pt,minimum size=3pt,label=below:{$X$}] (X) at ($(A)+(1.2,0.5)$) {};
\node[circle, fill=black,inner sep=0pt,minimum size=3pt,label=below:{$Y$}] (Y) at ($(X)+(0,0.7)$) {};
\node[circle, fill=black,inner sep=0pt,minimum size=3pt,label=below:{$Z$}] (Z) at ($(Y)+(0,0.7)$) {};

\draw (A) -- (X) -- (B);
\draw (A) -- (Y) -- (B);
\draw (A) -- (Z) -- (B);
\node[] at ($(X)+(0, -1.2)$) {$G=(V,E)$};
\node[] at ($(X)+(0, -1.8)$) {$(a)$};

% Graph G\(B,X)
\node[circle, fill=black,inner sep=0pt,minimum size=3pt,label=below:{$A$}] (A1) at (3.3,0) {};
\node[circle, fill=black,inner sep=0pt,minimum size=3pt,label=below:{$B$}] (B1) at ($(A1)+(2.4,0)$) {};
\node[circle, fill=black,inner sep=0pt,minimum size=3pt,label=below:{$X$}] (X1) at ($(A1)+(1.2,0.5)$) {};
\node[circle, fill=black,inner sep=0pt,minimum size=3pt,label=below:{$Y$}] (Y1) at ($(X1)+(0,0.7)$) {};
\node[circle, fill=black,inner sep=0pt,minimum size=3pt,label=below:{$Z$}] (Z1) at ($(Y1)+(0,0.7)$) {};

\draw (A1) -- (X1);% -- (B1);
\draw (A1) -- (Y1) -- (B1);
\draw (A1) -- (Z1) -- (B1);
\node[] at ($(X1)+(0, -1.2)$) {$G\backslash{(B,X)}$};
\node[] at ($(X1)+(0, -1.8)$) {$(b)$};

% Graph G/(B,X)
\node[circle, fill=black,inner sep=0pt,minimum size=3pt,label=below:{$A$}] (A2) at (0,-4) {};
\node[circle, fill=black,inner sep=0pt,minimum size=3pt,label=below:{$W$}] (B2) at ($(A2)+(2.4,0)$) {};
%\node[circle, fill=black,inner sep=0pt,minimum size=3pt,label=below:{$X$}] (X2) at ($(A2)+(1.2,0.5)$) {};
\coordinate (X2) at ($(A2)+(1.2,0.5)$);
\node[circle, fill=black,inner sep=0pt,minimum size=3pt,label=below:{$Y$}] (Y2) at ($(X2)+(0,0.7)$) {};
\node[circle, fill=black,inner sep=0pt,minimum size=3pt,label=below:{$Z$}] (Z2) at ($(Y2)+(0,0.7)$) {};

\draw (A2) -- (B2);
\draw (A2) -- (Y2) -- (B2);
\draw (A2) -- (Z2) -- (B2);
\node[] at ($(X2)+(0, -1.2)$) {$G/(B,X)$};
\node[] at ($(X2)+(0, -1.8)$) {$(c)$};

% Graph G*
\node[circle, fill=gray,inner sep=0pt,minimum size=3pt] (A4) at (3.3,-4) {};
\node[circle, fill=gray,inner sep=0pt,minimum size=3pt] (B4) at ($(A4)+(2.4,0)$) {};
\node[circle, fill=gray,inner sep=0pt,minimum size=3pt] (X4) at ($(A4)+(1.2,0.5)$) {};
\node[circle, fill=gray,inner sep=0pt,minimum size=3pt] (Y4) at ($(X4)+(0,0.7)$) {};
\node[circle, fill=gray,inner sep=0pt,minimum size=3pt] (Z4) at ($(Y4)+(0,0.7)$) {};

\draw[dashed,thin] (A4) -- (X4) -- (B4);
\draw[dashed,thin] (A4) -- (Y4) -- (B4);
\draw[dashed,thin] (A4) -- (Z4) -- (B4);

\node[circle, fill=black,inner sep=0pt,minimum size=3pt,label=below:{$f1$}] (f1) at ($(A4)+(1.2,-0.1)$) {};
\node[circle, fill=black,inner sep=0pt,minimum size=3pt,label=right:{$f2$}] (f2) at ($(X4)+(0,0.35)$) {};
\node[circle, fill=black,inner sep=0pt,minimum size=3pt,label=right:{$f3$}] (f3) at ($(Y4)+(0,0.3)$) {};

\draw[bend left=20] (f1) to (f2);
\draw[bend right=20] (f1) to (f2);
%\draw[bend left=20] (f1) to (f3);
%\draw[bend right=20] (f1) to (f3);
\draw (f1) .. controls ++(1.2,0) and ++(2.7,-1.9) ..  (f3);
\draw (f1) .. controls ++(-1.2,0) and ++(-2.7,-1.9) ..  (f3);
\draw[bend left=20] (f2) to (f3);
\draw[bend right=20] (f2) to (f3);

\node[] at ($(f1)+(0, -0.6)$) {$G^*$};
\node[] at ($(f1)+(0, -1.2)$) {$(d)$};
\end{tikzpicture}
\caption{(a) The graphic matroid $M(G)$ corresponding to graph $G$ is clearly connected; for any two distinct edges in $G$, a cycle contains them. (b) Deletion. The graph obtained by deleting edge $(B,X)$, the corresponding graphic matroid is disconnected. (c) Contraction. The graph obtained by contracting edge $(B,X)$, replacing vertices $B$ and $X$ with a new vertex $W$, and connecting the edges associated with $B$ and $X$ to $W$. The corresponding graphic matroid is connected. (d) Duality. The dual graph of $G$ with vertices $\{f_1, f_2, f_3\}$ and edges set consisting of a pair of parallel edges between each pair of vertices. The dual matroid of $M(G)$ is the  the corresponding graphic matroid of $G^*$, i.e.,$M(G)^*=M(G^*)$.}
\label{fig:matroid-concepts}
\end{figure}

\begin{lemma}\label{le:minor-commute}
Let $M$ be a matroid on the finite ground set $[n]$, $e$ and $f$ are two distinct elements of $[n]$. Then
\begin{enumerate}
\item[(1)] $(M\backslash{e})\backslash{f}=(M\backslash{f})\backslash{e}$;
\item[(2)] $(M/e)/f=(M/f)/e$;
\item[(3)] $(M\backslash{e})/f=(M/f)\backslash{e}$.
\end{enumerate}
\end{lemma}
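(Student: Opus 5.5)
The plan is to verify the three identities directly from the definitions of deletion and contraction, by case analysis on the status of $e$ and $f$ in $M$. Since each minor operation is defined by its collection of bases on the ground set $[n]-\{e,f\}$, it suffices to show that both sides of each identity have the same collection of bases. I would not compare the case-by-case formulas literally. Instead, I would first establish a uniform description of the bases of a double minor.

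\textbf{Unified description.} First I will show that for $e$ with $M\backslash e$ the bases are the maximal members of $\{B-e : B\in\mc{B}(M)\}$ among those of minimal ``$e$-content''. Concretely:
\begin{itemize}
\item $\mc{B}(M\backslash e)$ consists of the sets $B-e$ with $B\in\mc{B}(M)$ and $|B\cap\{e\}|$ minimal over $\mc{B}(M)$;
\item $\mc{B}(M/e)$ consists of the sets $B-e$ with $B\in\mc{B}(M)$ and $|B\cap\{e\}|$ maximal over $\mc{B}(M)$.
\end{itemize}
This is checked against the three cases of the definitions. If $e$ is a loop, no base contains $e$, so both descriptions give $\mc{B}(M)$. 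If $e$ is a coloop, every base contains $e$, so both give $\{B-e\}$. Otherwise both values of $|B\cap\{e\}|$ occur, and the descriptions match the third cases.

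\textbf{Iterating.} With this description, I will show that each double minor is determined by a lexicographic optimization over $\mc{B}(M)$. For example, $(M\backslash e)\backslash f$ has bases $B-\{e,f\}$, where $B$ first minimizes $|B\cap\{e\}|$ and then, among those, minimizes $|B\cap\{f\}|$. The key claim is that this lexicographic order does not matter. I expect this to be the main obstacle, and the base exchange axiom (B1) is the tool I would use for it.

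For (1) the claim follows without any lexicographic issue. The bases of $(M\backslash e)\backslash f$ are $B-\{e,f\}$ with $B\in\mc{B}(M)$ minimizing $|B\cap\{e,f\}|$. To see this, I must show that the set minimizing $e$ first and then $f$ is the same as the set minimizing $|B\cap\{e,f\}|$ directly. Suppose the minimum of $|B\cap\{e,f\}|$ is attained only by bases containing $e$, while some base $B'$ avoids $e$. Then $B'$ contains $f$. Choose a minimizer $B$ with $f\notin B$ and $e\in B$ (the case $B\ni e,f$ contradicts minimality against $B'$). Apply (B1) to $B$ and $B'$ at the element $e\in B\backslash B'$, producing $B+y-e$. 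The intersection count does not increase, and $e$ is removed, which is a contradiction. The same exchange argument handles the symmetric conditions. Symmetry in $e,f$ then gives (1). Case (2) is the dual argument with maxima in place of minima; alternatively one can use $(M/e)^*=M^*\backslash e$, but that identity is not stated, so I would just redo the exchange argument.

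\textbf{Mixed case (3).} Here I would show that both sides have bases $B-\{e,f\}$, where $B\in\mc{B}(M)$ minimizes $|B\cap\{e\}|-|B\cap\{f\}|$. Equivalently, $B$ avoids $e$ and contains $f$ whenever such a base exists, with the appropriate degenerate fallbacks. I would again use (B1) to show that a base optimal lexicographically in either order is optimal for this combined objective. Suppose $B$ is optimal in one order but not the other. Then a single exchange $B+y-x$ with $\{x,y\}=\{e,f\}$, or with one of them being another element, improves the combined count. This exchange step, ensuring it never trades $e$ for $f$ in the wrong direction, is the delicate point, and I would check the four loop/coloop subcases for $e$ in $M/f$ explicitly if the general argument gets murky.
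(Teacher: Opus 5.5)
The gap is in (1), and by mirror image in (2). Your min/max description of $\mathcal{B}(M\backslash e)$ and $\mathcal{B}(M/e)$ is correct, and so is its lexicographic iteration. But in (1) you reduce to a false claim: that the bases optimal for ``minimize $|B\cap\{e\}|$, then $|B\cap\{f\}|$'' are exactly the bases minimizing $|B\cap\{e,f\}|$.

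Take $U_{2,3}$ on $\{e,f,g\}$. The lexicographic family is $\{\{f,g\}\}$ in the order $(e,f)$ and $\{\{e,g\}\}$ in the order $(f,e)$. The joint minimizers are both sets. Only the traces $B-\{e,f\}$ agree. So the lexicographic issue does arise in (1), precisely when $e$ and $f$ are in series, and equality must be proved after removing $e,f$.

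Your exchange argument only gives the easy inclusion (lexicographic optima are joint optima). The situation it refutes is contradictory without any exchange, since a base avoiding $e$ meets $\{e,f\}$ in at most one element. What is missing is the reverse inclusion on traces. Suppose a joint minimizer $B$ has $e\in B$, $f\notin B$, some base $B'$ avoids $e$, and every base avoiding $e$ contains $f$. Apply (B1) to $B,B'$ at $e$. The base $B-e+y$ avoids $e$, hence contains $f$, forcing $y=f$. So $B-e+f$ is lexicographically optimal with the same trace as $B$.

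For (2), the mirrored step must show $B-f+e\in\mathcal{B}$ for a joint maximizer $B\ni f$, $e\notin B$, when no base contains both. This needs the opposite exchange direction: for $y\in B_2\backslash B_1$ there is $x\in B_1\backslash B_2$ with $B_1-x+y\in\mathcal{B}$. That is not (B1) as stated, but it follows by applying (B1) to $\mathcal{B}^*$ from Lemma~\ref{le:matroid-duality}. Alternatively, deduce (2) from (1) via $M/e=(M^*\backslash e)^*$. The paper records this identity right after the lemma, and it can be checked directly from the three-case definitions.

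Your plan for (3) is sound. There the two lexicographic families really do coincide with the minimizers of $|B\cap\{e\}|-|B\cap\{f\}|$. The only possible conflict is a base containing both $e,f$ alongside a base containing neither, with no base containing $f$ but not $e$. Applying (B1) at $e$ to those two bases produces a base containing $f$ but not $e$, since the incoming element comes from a base avoiding $f$.

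For comparison, the paper gives no proof and defers to Oxley and Welsh. There all three identities follow at once from $\mathcal{I}(M\backslash e)=\{I\in\mathcal{I}(M): e\notin I\}$ and $r_{M/f}(X)=r(X\cup\{f\})-r(\{f\})$. With the paper's base-level definitions, that route only requires checking the three-case formulas against these descriptions, and it avoids the trace subtlety entirely.
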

The Lemma~\ref{le:minor-commute} is easy to verify and the proof can be found in \cite{OUP/oxley11,welsh1976matroid}. If $T\subseteq [n]$, we write $M\backslash{T}$ and $M/T$ as successive deletions or contractions of the elements in $T$, respectively.

Deletion and contraction are dual operations, in the sense that, for any $e\in[n]$,
\begin{equation*}
 M/e=(M^*\backslash{e})^*\;\;\text{and}\;\;M\backslash{e}=(M^*/e)^*.
\end{equation*}

\begin{definition}[\textbf{Minor}]
Let $M$ be a matroid, a \emph{minor} of $M$ is any matroid that can be obtained from $M$ by a sequence of deletions and contractions. 
\end{definition}

\section{Quantum states and matroids}\label{main-section}
In this section, we will prove our results and investigate the properties of quantum states endowed with matroid structures, including quantum measurement and state flipping (dual states). We begin by introducing multiaffine polynomials, which serve as a fundamental bridge between matroid theory and the representation of quantum states.

\subsection{Multiaffine polynomials}
A \emph{multiaffine polynomial} (or \emph{multilinear polynomial}) is a multivariate polynomial in which each variable appears with degree at most one; equivalently, no variable is raised to a power greater than one, and each monomial is a product of distinct variables. A multivariate polynomial is \emph{homogeneous} if every monomial of the polynomial has  the same degree.

The following lemma 
%in  Choe, Oxley, Sokal and Wagner~\cite{choe2004homogeneous} 
for multiaffine polynomial is obvious but crucial.

\begin{lemma}[\cite{choe2004homogeneous}, Lemma 4.7]\label{le:multiaffine-product}
Let $p_1(x)$ and $p_2(x)$ be non-constant polynomials in the variables $x=\{x_i\}_{i\in[n]}$ with complex coefficients. If $p_1p_2$ is multiaffine, then
\begin{enumerate}
    \item[(a)] There exists a partition $\{E_1,E_2\}$ of $[n]$ such that $p_i$ uses only the variables $\{x_e\}_{e\in E_i}$ $\;(i=1,2)$.
    \item[(b)] $p_1$ and $p_2$ are both multiaffine.\\
\end{enumerate}
\end{lemma}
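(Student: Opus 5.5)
The plan is to prove part (b) first and deduce part (a) from it, working throughout with degrees in each individual variable. For a polynomial $p$ and $i\in[n]$, write $\deg_i p$ for the degree of $p$ in $x_i$, treating the other variables as coefficients. Since $\mathbb{C}[x_1,\dots,x_n]$ is an integral domain, the leading coefficients in $x_i$ (themselves polynomials in the remaining variables) multiply to a nonzero polynomial. This gives the additivity
\begin{equation*}
\deg_i(p_1p_2)=\deg_i p_1+\deg_i p_2 \qquad (i\in[n]).
\end{equation*}
This identity carries the whole argument.

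For (b), multiaffinity of $p_1p_2$ means $\deg_i(p_1p_2)\le 1$ for every $i$. Additivity then gives $\deg_i p_1+\deg_i p_2\le 1$, and both summands are nonnegative, so $\deg_i p_1\le 1$ and $\deg_i p_2\le 1$ for every $i$. This is exactly the statement that $p_1$ and $p_2$ are multiaffine.

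For (a), the same inequality shows that each index $i$ has $\deg_i p_1=0$ or $\deg_i p_2=0$; that is, no variable occurs in both factors. Let $E_1$ be the set of indices of variables that actually occur in $p_1$, and let $E_2=[n]\setminus E_1$. Then $p_2$ uses only variables indexed by $E_2$. Both sets are nonempty because $p_1$ and $p_2$ are non-constant: a non-constant polynomial uses at least one variable, so $E_1\neq\emptyset$, and the variable used by $p_2$ lies in $E_2$. Hence $\{E_1,E_2\}$ is a genuine partition of $[n]$ with the required property. Variables appearing in neither factor can be placed in either block; I would put them in $E_2$.

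The main, though mild, obstacle is justifying the additivity of $\deg_i$. Without it, cancellation between monomials might seem able to hide a shared variable. I would handle it by viewing each $p_j$ as a polynomial in $x_i$ over the domain $\mathbb{C}[x_k:k\neq i]$ and noting that the product of the two leading coefficients is nonzero. Everything else is bookkeeping.
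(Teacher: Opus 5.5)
Your proposal is correct and is essentially the standard proof. The paper gives no argument of its own and defers to \cite{choe2004homogeneous}, where the lemma likewise follows from $\deg_{x_i}(p_1p_2)=\deg_{x_i}p_1+\deg_{x_i}p_2$; this holds because $\mathbb{C}[x_k:k\neq i]$ is an integral domain, and your bookkeeping (both blocks nonempty because $p_1,p_2$ are non-constant, unused variables placed in $E_2$) correctly makes $\{E_1,E_2\}$ a genuine partition.
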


\begin{definition}
Let $p(x)=\sum_{S\subseteq[n]}\alpha_{S}x^S\in\mathbb{C}[x_1,\dots,x_n]$ be a polynomial, the support of $p(x)$ is defined as 
\begin{equation*}
    \supp(p) = \{S\subseteq[n]|\alpha_S\neq 0\}.
\end{equation*}
\end{definition}

\begin{lemma}[\cite{choe2004homogeneous}, extend Proposition 4.6 to complex coefficients]\label{le:conncted-matroid-to-irreducible}
Let $p$ be a multiaffine polynomial in the variables $\{x_i\}_{i\in[n]}$  with complex coefficients. If $\supp(p)$ is the collection of bases of a connected matroid on the ground set $[n]$, then $p$ is irreducible.
\end{lemma}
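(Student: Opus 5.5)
Suppose, for contradiction, that $p=p_1p_2$ with $p_1,p_2$ both non-constant. Since $p$ is multiaffine, Lemma~\ref{le:multiaffine-product} applies. It gives a partition $\{E_1,E_2\}$ of $[n]$ with $p_i$ using only the variables indexed by $E_i$, and it says $p_1,p_2$ are multiaffine. Both $p_i$ are non-constant, so each uses at least one variable, and hence both $E_1$ and $E_2$ are nonempty. Moreover, $M$ is connected with $n\ge 2$ in any nontrivial case; if $n=1$, $p$ is linear and irreducible trivially.

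Write $p_1=\sum_{S\subseteq E_1}\beta_S x^S$ and $p_2=\sum_{T\subseteq E_2}\gamma_T x^T$. The variable sets are disjoint, so distinct pairs $(S,T)$ give distinct monomials $x^{S\sqcup T}$, and no cancellation can occur. Therefore
\begin{equation*}
\supp(p)=\{S\sqcup T\;|\;S\in\supp(p_1),\;T\in\supp(p_2)\}.
\end{equation*}
This product structure of the support is the key step.

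Next, show that $\mc{B}_1:=\supp(p_1)$ is the collection of bases of a matroid on $E_1$, and likewise for $\mc{B}_2:=\supp(p_2)$ on $E_2$. Each $\mc{B}_i$ is nonempty because $p_i\neq0$. To check axiom (B1) of Lemma~\ref{th:base-axioms} for $\mc{B}_1$, take $S_1,S_2\in\mc{B}_1$ and $x\in S_1\backslash S_2$. Fix any $T\in\mc{B}_2$ and apply (B1) in $\mc{B}=\supp(p)$ to $B_1=S_1\sqcup T$ and $B_2=S_2\sqcup T$. This yields $y\in B_2\backslash B_1=S_2\backslash S_1$ with $B_1+y-x\in\mc{B}$. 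By the product structure, $B_1+y-x=(S_1+y-x)\sqcup T$ forces $S_1+y-x\in\mc{B}_1$. The same argument works for $\mc{B}_2$.

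Then $\mc{B}=\{B_1\sqcup B_2\mid B_i\in\mc{B}_i\}$, so $M=M_1\oplus M_2$ with $E_1$ a nonempty proper subset of $[n]$. By Lemma~\ref{le:connected-matroid-sum}, $M$ is disconnected, which is a contradiction. Hence $p$ is irreducible.

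The main subtlety is the no-cancellation claim. It holds because the variables of $p_1$ and $p_2$ are disjoint, and it works over $\mathbb{C}$ just as over $\mathbb{R}$, so the extension to complex coefficients needs nothing more. The proof does not use homogeneity.
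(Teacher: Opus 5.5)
Your proof is correct and essentially matches the argument the paper defers to in \cite{choe2004homogeneous}: you factor via Lemma~\ref{le:multiaffine-product}, note that disjoint variable sets make $\supp(p)$ the product of the two supports (the same computation the paper does in Lemma~\ref{le:entanglement-by-irreducibility}), and read off a direct-sum decomposition that contradicts Lemma~\ref{le:connected-matroid-sum}. One small correction: for $n=1$ the single element may be a loop, so $p$ is a nonzero constant and hence a unit (the statement itself tacitly needs $r(M)\geq 1$); for $n\geq 2$ you should say explicitly that a connected matroid has no loops, so $r(M)\geq 1$ and $p$ is non-constant, which irreducibility also requires.
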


In the literature~\cite{choe2004homogeneous}, the original proposition is stated for real coefficients; however, the result holds over complex coefficients without requiring any changes to the proof. The proofs of Lemma~\ref{le:multiaffine-product} and Lemma~\ref{le:conncted-matroid-to-irreducible} can be found in ~\cite{choe2004homogeneous}.

Let $|\psi\rangle=\sum_{S\subseteq[n]}\alpha_S|S\rangle$ be a quantum state of an $n$-qubit system, we assign a polynomial in the variables $x=\{x_i\}_{i\in [n]}$ to $|\psi\rangle$,
\begin{equation}
    g_{|\psi\rangle}(x)=\sum_{S\in\supp(|\psi\rangle)}\alpha_Sx^S.
\end{equation}
We call the polynomial $g_{|\psi\rangle}$  the  \emph{generating polynomial} of the quantum state $|\psi\rangle$. Obviously, $g_{|\psi\rangle}$ is a multiaffine polynomial.
For the matroid-supported states, since the bases of the matroid are all of equal size, the generating polynomials of the matroid-supported state are homogeneous.
Now, the GME of a quantum state can be characterized by the irreducibility of its generating polynomial.

\begin{lemma}\label{le:entanglement-by-irreducibility}
A quantum state is genuinely entangled if and only if its generating polynomial is irreducible.    
\end{lemma}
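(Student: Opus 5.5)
The plan is to prove the two directions separately, translating tensor products of states into products of polynomials in disjoint sets of variables. The basic observation is that if $T\subsetneq[n]$ and $|\psi\rangle=|\psi_T\rangle\otimes|\psi_{[n]\backslash T}\rangle$ with $|\psi_T\rangle=\sum_{S\subseteq T}\beta_S|S\rangle$ and $|\psi_{[n]\backslash T}\rangle=\sum_{R\subseteq [n]\backslash T}\gamma_R|R\rangle$, then $\alpha_{S\sqcup R}=\beta_S\gamma_R$. Since $x^{S\sqcup R}=x^Sx^R$, this gives
\begin{equation*}
g_{|\psi\rangle}=g_{|\psi_T\rangle}\cdot g_{|\psi_{[n]\backslash T}\rangle},
\end{equation*}
where the two factors use only the variables indexed by $T$ and by $[n]\backslash T$, respectively.

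\textbf{Biseparable $\Rightarrow$ reducible.} This is the contrapositive of ``irreducible $\Rightarrow$ genuinely entangled'' and follows from the displayed identity, provided both factors are non-constant. A genuine obstacle appears here. If, say, $|\psi_T\rangle=|0\cdots0\rangle$, then $g_{|\psi_T\rangle}=1$ is constant. For example, $|0\rangle\otimes|\Phi^+\rangle$ is biseparable, yet its generating polynomial $(1+x_2x_3)/\sqrt2$ is irreducible in the usual sense. The same issue arises for matroid-supported states whenever the matroid has a loop, since $x_e$ then never appears in $g$.

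I would therefore first fix the convention needed to make the lemma true, and state it explicitly. There are two options.
\begin{itemize}
\item[(i)] Declare $g$ reducible when it factors as $p_1p_2$ with $p_i$ depending only on the variables in $E_i$, for a nontrivial partition $\{E_1,E_2\}$ of $[n]$, allowing a factor that happens to be constant in its own variables.
\item[(ii)] Work with the bihomogenized polynomial $\tilde g(x,y)=\sum_S\alpha_S x^S y^{[n]\backslash S}$, which is multiaffine and homogeneous of degree $n$. In it every qubit contributes a nonconstant factor, for instance $y_e$ for $|0_e\rangle$.
\end{itemize}
I would adopt (ii). The identity above then holds verbatim with $\tilde g$, and each factor has degree $|T|\ge1$ or $n-|T|\ge1$, hence is nonconstant. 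This settles the first direction.

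\textbf{Reducible $\Rightarrow$ biseparable.} Suppose $\tilde g=p_1p_2$ with both factors nonconstant. Since $\tilde g$ is multiaffine, Lemma~\ref{le:multiaffine-product} shows that $p_1,p_2$ are multiaffine and use disjoint variable sets. In $\tilde g$ every monomial contains exactly one of $x_e,y_e$ for each $e$, so I would argue that for each $e$ both $x_e$ and $y_e$ lie in the same factor's variable set. Indeed, if $x_e$ were in $p_1$ and $y_e$ in $p_2$, some monomial would contain both or neither, which is a contradiction. This yields a partition $\{T,[n]\backslash T\}$ with both parts nonempty, because the factors are nonconstant.

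By the same one-of-$x_e,y_e$ argument, each factor is itself of the form $\tilde g$ for some vector $v_1\in(\mathbb{C}^2)^{\otimes T}$, respectively $v_2\in(\mathbb{C}^2)^{\otimes [n]\backslash T}$. Comparing coefficients gives $|\psi\rangle=v_1\otimes v_2$. Rescaling, with $v_i\neq0$, produces normalized states, so $|\psi\rangle$ is biseparable.

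Under convention (i) the proof is identical, using Lemma~\ref{le:multiaffine-product}(a) directly and distributing the unused variables arbitrarily. The main step requiring care is the convention issue, namely ensuring that factors correspond to genuinely nontrivial subsystems; the rest is bookkeeping. I would also remark that for loopless, coloop-free settings such as connected matroids with $n\ge2$, the plain $g$ suffices, which is what the later theorems use.
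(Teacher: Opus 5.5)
Your proposal is correct, and its skeleton is the paper's. A splitting along $T$ turns the generating polynomial into a product of polynomials in disjoint variable sets. Conversely, Lemma~\ref{le:multiaffine-product} turns a nontrivial factorization into a splitting, with coefficients compared monomial by monomial.

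What you add is a genuine correction. In its ($\Leftarrow$) half the paper writes $g_{|\psi\rangle}=g_{|\psi_A\rangle}\cdot g_{|\psi_B\rangle}$ and concludes reducibility without checking that both factors are non-constant. This fails exactly when one tensor factor is $|0\cdots 0\rangle$, so with the standard meaning of irreducibility the lemma as stated is false. Your $|0\rangle\otimes|\Phi^+\rangle$ is a counterexample. So is the matroid-base state of $U_{1,2}\oplus U_{0,1}$, which equals $|\Psi^+\rangle\otimes|0\rangle$ but has $g=(x_1+x_2)/\sqrt2$. The paper's other half (reducible $\Rightarrow$ biseparable) is sound. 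It is your second half without the $x_e/y_e$ pairing step.

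\textbf{What each version buys.} Your bihomogenized $\tilde g$ gives an unconditional equivalence with ordinary irreducibility. It even covers $n=1$, where $|0\rangle$ is vacuously genuinely entangled while $g=1$ is not irreducible. The plain $g$, by contrast, is what Lemma~\ref{le:conncted-matroid-to-irreducible} speaks about, since $\supp(\tilde g)$ is generally not a base family. So your closing remark is exactly the bridge needed for Theorem~\ref{th:matroid-supported-state}.

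Only looplessness is required for that bridge; coloops are harmless. If every $x_e$ occurs in $g$, a constant factor on a nonempty block $T$ would mean $g$ omits the variables $x_T$, so any splitting forces both factors to be non-constant.

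The same constant-factor issue resurfaces in the necessity half of Theorem~\ref{th:matroid-base-state} when a direct summand consists only of loops: there $g$ need not become reducible. The conclusion is still rescued directly by $|M\rangle=|M_1\rangle\otimes|M_2\rangle$.
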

\begin{proof}
 We prove its contrapositive. Let $|\psi\rangle=\sum_{S\subseteq[n]}\alpha_S|S\rangle$ and $g_{|\psi\rangle}(x)$ be a quantum state and its generating polynomial, respectively.

($\Leftarrow$). If $|\psi\rangle$ is separable, that is, $[n]$ has a partition $\{A,B\}$ such that
 \begin{equation}
     \begin{array}{rcl}
        |\psi\rangle&=&\sum_{S\in\supp{|\psi\rangle}}\alpha_S|S\rangle\\
        &=&|\psi_A\rangle\otimes|\psi_B\rangle\\
        &=&(\sum_{S_A\in\supp(|\psi_A\rangle}\alpha_{S_A}|S_A\rangle)\otimes\\
        &&(\sum_{S_B\in\supp(|\psi_B\rangle}\alpha_{S_B}|S_B\rangle)\\
        &=&\sum_{S_A\in\supp(|\psi_A),S_B\in\supp(|\psi_B)}\alpha_{S_A}\alpha_{S_B}|S_A\sqcup S_B\rangle,
     \end{array}
 \end{equation}
 where $|\psi_A\rangle$ and $|\psi_B\rangle$ are the quantum states of the qubits determined by $A$ and $B$, respectively.

Now, the generating polynomial of $|\psi\rangle$ can be written as
\begin{equation}
    \begin{array}{rcl}
        g_{|\psi\rangle}(x) &=&\sum_{S\in\supp(|\psi\rangle)}\alpha_Sx^S\\
        &=&\sum_{S_A\in\supp(|\psi_A),S_B\in\supp(|\psi_B)}\alpha_{S_A}\alpha_{S_B}x^{S_A\sqcup S_B}\\
        &=&(\sum_{S_A\in\supp(|\psi_A\rangle}\alpha_{S_A}x^{S_A})\;\cdot\\
        &&(\sum_{S_B\in\supp(|\psi_B\rangle}\alpha_{S_B}x^{S_B})\\
        &=&g_{|\psi_A\rangle}(x)\cdot g_{|\psi_B\rangle}(x).
    \end{array}
\end{equation}
So, a biseparable quantum state implies that its generating polynomial is reducible.\\

($\Rightarrow$).On the other hand, if the generating polynomial of $|\psi\rangle$ is reducible, by Lemma~\ref{le:multiaffine-product}, there exists a partition $\{A,B\}$ of $[n]$ and multiaffine polynomials $f\in\mathbb{C}[x_i:i\in A]$, $h\in\mathbb{C}[x_i:i\in B]$ such that 
\begin{equation*}
    g_{|\psi\rangle} = f\cdot h.
\end{equation*}
Let $f(x)=\sum_{S_A\in\supp(f)}\alpha_{S_A}x^{S_A}$ and $h(x)=\sum_{S_B\in\supp(h)}\alpha_{S_B}x^{S_B}$. Then 
\begin{equation}
    \begin{array}{rcl}
        g_{|\psi\rangle}(x)&=&\sum_{S\in\supp(|\psi\rangle)}\alpha_Sx^{S}\\
        &=&(\sum_{S_A\in\supp(f)}\alpha_{S_A}x^{S_A})\;\cdot\\
        &&(\sum_{S_B\in\supp(h)}\alpha_{S_B}x^{S_B})\\
        &=& \sum_{S_A\in\supp(f),S_B\in\supp(h)}\alpha_{S_A}\alpha_{S_B}x^{S_A\sqcup S_B}  
    \end{array}
\end{equation}
It is not difficult to see that $\supp(|\psi\rangle)=\{S_A\sqcup S_B|S_A\in\supp(f),S_B\in\supp(h)\}$, that is, for any $S\in\supp(|\psi\rangle)$, there exists a unique pair $(S_A,S_B)$ such that $S=S_A\sqcup S_B$ and $\alpha_S=\alpha_{S_A}\alpha_{S_B}$. So
\begin{equation}
    \begin{array}{rcl}
        |\psi\rangle&= & \sum_{S\in\supp{|\psi\rangle}}\alpha_S|S\rangle \\
         &=& \sum_{S_A\in\supp(f)}\sum_{S_B\in\supp(h)}\alpha_{S_A}\alpha_{S_B}|S_A\sqcup S_B\rangle\\
         &=&(\sum_{S_A\in\supp(f)}\alpha_{S_A}|S_A\rangle)\;\otimes\\
         &&(\sum_{S_B\in\supp(h)}\alpha_{S_B}|S_B\rangle).
    \end{array}
\end{equation}
Thus, from the reducible generating polynomial of $|\psi\rangle$, we can deduce that $|\psi\rangle$ is biseparable.
\end{proof}

\subsection{Quantum entanglement and matroid connectivity}
Matroid connectivity captures the indivisible structure of a combinatorial system—whether a graph without bridges or a matrix whose column space resists direct-sum decomposition—by ensuring that every pair of elements coexists in a common circuit, preventing nontrivial rank-preserving partitions. The property of matroid connectivity reminds us of the characterization of quantum entanglement, where a composite system resists factorization into independent subsystems. Both notions quantify irreducibility: circuits link matroid elements just as quantum correlations link subsystems. This idea paves the way for the cross-fertilization of combinatorial optimization and quantum information theory.

Specifically, for a matroid-supported state, its entanglement can be understood through the underlying matroid's connectivity. While establishing this correspondence directly from the formal definition of connected matroid is nontrivial, Lemma~\ref{le:connected-matroid} and Lemma~\ref{le:connected-matroid-sum} providing a combinatorial framework for understanding quantum entanglement. A circuit acts like an invisible chain linking each pair of qubits. If the underlying matroid of a matroid-supported state is connected — that is, for every pair of qubits there exists a circuit that contains both — then there is no partition of all these qubits that would allow the matroid-supported state to factorize. Otherwise, there exists a pair of qubits for which no circuit contains both. We obtain the following theorem.

%In the following, we restate Theorem~\ref{th:matroid-supported-state}.

\begin{theorem}\label{th:matroid-supported-state}
 Let $|\psi\rangle=\sum_{S\subseteq [n]}\alpha_S|S\rangle$ be a matroid-supported state with supporting matroid $M$. If $M$ is a connected matroid, then $|\psi\rangle$ is genuinely entangled.   
\end{theorem}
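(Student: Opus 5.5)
The plan is to chain the two results already in place: Lemma~\ref{le:conncted-matroid-to-irreducible} turns connectedness of the matroid into irreducibility of a polynomial, and Lemma~\ref{le:entanglement-by-irreducibility} turns irreducibility of the generating polynomial into genuine entanglement. This follows the implication chain in Figure~\ref{fig:proof-overview}.

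\textbf{Step 1: set up the generating polynomial.} Write $g_{|\psi\rangle}(x)=\sum_{S\in\supp(|\psi\rangle)}\alpha_S x^S$. This is a multiaffine polynomial in the variables $\{x_i\}_{i\in[n]}$ with complex coefficients.

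\textbf{Step 2: identify its support.} By construction, $\supp(g_{|\psi\rangle})=\supp(|\psi\rangle)$. By Definition~\ref{def:matroid-supported-state}, this set equals $\mc{B}(M)$, the collection of bases of $M$ on the ground set $[n]$.

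\textbf{Step 3: obtain irreducibility.} Since $M$ is connected, Lemma~\ref{le:conncted-matroid-to-irreducible} applies directly and gives that $g_{|\psi\rangle}$ is irreducible.

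\textbf{Step 4: conclude entanglement.} Lemma~\ref{le:entanglement-by-irreducibility} (the direction "irreducible $\Rightarrow$ genuinely entangled", i.e.\ the contrapositive of the $(\Leftarrow)$ part) then shows that $|\psi\rangle$ is genuinely entangled.

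\textbf{Main obstacle: degenerate cases.} The argument itself is short; the only real care needed is that the hypotheses of the cited lemmas hold exactly.
\begin{itemize}
\item The ground set must be all of $[n]$, including any loops or coloops. A connected matroid on $n\ge 2$ elements has neither: a loop or coloop $e$ gives $r(\{e\})+r([n]-e)=r([n])$, contradicting connectedness. So every variable actually occurs, and the support condition of Lemma~\ref{le:conncted-matroid-to-irreducible} is satisfied verbatim.
\item If $n=1$, genuine entanglement holds vacuously, since there is no nonempty proper subset $T$.
\item The irreducibility notion must match between the two lemmas. In Lemma~\ref{le:entanglement-by-irreducibility}, reducibility means factoring into non-constant factors, which is what Lemma~\ref{le:multiaffine-product} requires. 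Normalization constants of the state are irrelevant to this.
\end{itemize}
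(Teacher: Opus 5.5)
Your proposal is correct and is exactly the paper's proof: form $g_{|\psi\rangle}$, note $\supp(g_{|\psi\rangle})=\mc{B}(M)$, apply Lemma~\ref{le:conncted-matroid-to-irreducible}, then Lemma~\ref{le:entanglement-by-irreducibility}. Your remark that a connected $M$ on $n\ge 2$ elements has no loops, so every variable occurs, is really needed in Step~4 rather than Step~3: it guarantees that in a biseparable factorization $g_{|\psi\rangle}=g_{|\psi_A\rangle}\,g_{|\psi_B\rangle}$ both factors are non-constant, and without it the ``irreducible $\Rightarrow$ genuinely entangled'' direction of Lemma~\ref{le:entanglement-by-irreducibility} fails (e.g.\ $|0\rangle\otimes|W_3\rangle$ has the irreducible generating polynomial $(x_2+x_3+x_4)/\sqrt{3}$), a point the paper's proof leaves implicit.
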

\begin{proof}
 Let $g_{|\psi\rangle}(x)=\sum_{S\in\supp{|\psi\rangle}}\alpha_Sx^S$ be the generating polynomial of $|\psi\rangle$. Since $|\psi\rangle$ is a matroid-supported state with supporting matroid $M$. Thus $\supp(g_{|\psi\rangle})$ is the collection of bases of $M$.
 By Lemma~\ref{le:conncted-matroid-to-irreducible}, we know that $g_{|\psi\rangle}$ is irreducible if $M$ is a connected matroid. Then applying Lemma~\ref{le:entanglement-by-irreducibility}, we conclude that $|\psi\rangle$ is genuinely entangled.
\end{proof}

%Todo::\remove{The most general method for certifying GME is the entanglement witness (EW)~\cite{guhne2009entanglement}. An EW is a Hermitian operator $W$ such that its expectation value is non-negative for every biseparable state. Then a negative expectation value $\text{Tr}(W\rho)$ implies that the state $\rho$ is genuinely entangled. To detect matroid-supported states with different coefficients, we need to design different entanglement witnesses. However, by using our results, we only need to show that the corresponding matroid is connected. This greatly increases the detection efficiency. Since there exists an efficient algorithm~\cite{cunnigham1974a,bixby1979matroids} for determining whether a matroid is connected.}

The converse of Theorem~\ref{th:matroid-supported-state} does not hold in general. In particular, an entangled matroid-supported state does not necessarily correspond to a connected matroid. We now present a counter-example demonstrating this failure of the converse.

\begin{example}\label{ex:disconnected-matroid-shows-GE}
Let $|\psi\rangle = \alpha_{13}|1010\rangle+\alpha_{14}|1001\rangle+\alpha_{23}|0110\rangle+\alpha_{24}|0101\rangle$ be a 4-qubit quantum state with support $\{\{1,3\},\{1,4\},\{2,3\},\{2,4\}\}$, this is a matroid-supported state. But $|\psi\rangle$ is genuinely entangled if and only if 
\begin{equation*}
    \begin{vmatrix}
    \alpha_{13} & \alpha_{23}\\
    \alpha_{14} & \alpha_{24}
    \end{vmatrix} \neq 0.
\end{equation*}
But the support $\{\{1,3\},\{1,4\},\{2,3\},\{2,4\}\}=\{\{1\},\{2\}\}\otimes\{\{3\},\{4\}\}$ which means that the supporting matroid of $|\psi\rangle$ is separable.
\end{example}

As illustrated by Example~\ref{ex:disconnected-matroid-shows-GE}, entangled state can be constructed on disconnected matroid. This asymmetry reveals that the properties of an arbitrary matroid-supported state do not reliably characterize the structural properties of its underlying matroid.
To establish a more robust and bijective correspondence, we define a canonical state called  matroid-base state(see Definition~\ref{def:matroid-base-state}). This state is constructed as the uniform superposition over all bases of a matroid.

\begin{theorem}\label{th:matroid-base-state}
Let $M=([n],\mc{B})$ be a matroid defined on the finite ground set $[n]$, where $\mc{B}\subseteq 2^{[n]}$ is the collection of bases of $M$, the matroid-base state $|M\rangle=\frac{1}{\sqrt{|\mc{B}|}}\sum_{S\in\mc{B}}|S\rangle$ is genuinely entangled if and only if $M$ is a connected matroid.   
\end{theorem}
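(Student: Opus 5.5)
The plan is to treat the two directions separately. The forward-looking direction (connected $\Rightarrow$ genuinely entangled) is immediate from what precedes: $|M\rangle$ is a matroid-supported state with supporting matroid $M$, since $\supp(|M\rangle)=\mc{B}$, so Theorem~\ref{th:matroid-supported-state} applies directly. All the work is therefore in the converse, which I will prove in contrapositive form: if $M$ is disconnected, then $|M\rangle$ is biseparable.

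For the converse, assume $M$ is disconnected. By Lemma~\ref{le:connected-matroid-sum} there is a nonempty proper subset $T\subsetneq[n]$ with $M=M_1\oplus M_2$, where $M_1=(T,\mc{B}_1)$ and $M_2=([n]\backslash T,\mc{B}_2)$. The key combinatorial fact is that the map
\begin{equation*}
\mc{B}_1\times\mc{B}_2\to\mc{B},\qquad (B_1,B_2)\mapsto B_1\sqcup B_2
\end{equation*}
is a bijection. It is surjective by the definition of $\oplus$, and injective because $B_1=B\cap T$ and $B_2=B\cap([n]\backslash T)$ are recovered from $B$. Consequently $|\mc{B}|=|\mc{B}_1|\,|\mc{B}_2|$.

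Next I form the matroid-base states $|M_1\rangle$ on the qubits in $T$ and $|M_2\rangle$ on the qubits in $[n]\backslash T$, and compute
\begin{equation*}
|M_1\rangle\otimes|M_2\rangle=\frac{1}{\sqrt{|\mc{B}_1||\mc{B}_2|}}\sum_{B_1\in\mc{B}_1,\,B_2\in\mc{B}_2}|B_1\sqcup B_2\rangle .
\end{equation*}
Using the bijection above, this equals $\frac{1}{\sqrt{|\mc{B}|}}\sum_{B\in\mc{B}}|B\rangle=|M\rangle$, so $|M\rangle$ is biseparable. The same conclusion can be phrased through the generating polynomial: $g_{|M\rangle}$ is a scalar multiple of $g_{|M_1\rangle}\cdot g_{|M_2\rangle}$, hence reducible, and Lemma~\ref{le:entanglement-by-irreducibility} then gives biseparability.

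The only step needing care is the bijection, since it is what guarantees that the product state has exactly uniform amplitudes matching $|M\rangle$. This is precisely where uniformity matters. For a general matroid-supported state the amplitudes need not factor, as Example~\ref{ex:disconnected-matroid-shows-GE} shows, whereas uniform amplitudes always factor as $\frac{1}{\sqrt{|\mc{B}_1|}}\cdot\frac{1}{\sqrt{|\mc{B}_2|}}$.
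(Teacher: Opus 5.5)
Your proof is correct. The sufficiency direction is the paper's own argument (an appeal to Theorem~\ref{th:matroid-supported-state}); for necessity you take a more direct route.

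\textbf{The paper's route.} It argues by contradiction through polynomials. Genuine entanglement gives irreducibility of $g_{|M\rangle}$ via Lemma~\ref{le:entanglement-by-irreducibility}. A decomposition $M=M_1\oplus M_2$ gives $g_{|M\rangle}=\frac{1}{\sqrt{|\mc{B}|}}\bigl(\sum_{B_1\in\mc{B}_1}x^{B_1}\bigr)\bigl(\sum_{B_2\in\mc{B}_2}x^{B_2}\bigr)$, which is declared to contradict irreducibility.

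\textbf{Your route.} You write down the tensor factorization $|M\rangle=|M_1\rangle\otimes|M_2\rangle$ itself, with the bijection $\mc{B}_1\times\mc{B}_2\to\mc{B}$ taking care of the normalization.

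\textbf{What each buys.} The paper's version fits its connectivity--irreducibility--entanglement framework. Yours is more elementary, since it needs neither Lemma~\ref{le:entanglement-by-irreducibility} nor Lemma~\ref{le:multiaffine-product}. It is also more robust. If one summand, say $M_2$, consists only of loops, then $\mc{B}_2=\{\emptyset\}$ and the second polynomial factor is the constant $1$, so the paper's factorization yields no contradiction. For example, take $M=U_{1,2}\oplus U_{0,1}$ ($U_{1,2}$ on $\{1,2\}$, a loop at $3$). Then $g_{|M\rangle}=\frac{1}{\sqrt{2}}(x_1+x_2)$ is irreducible, even though $|M\rangle=|\Psi^+\rangle\otimes|0\rangle$ is biseparable. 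The paper's argument therefore needs a separate remark that loops force biseparability. Your direct argument covers this case unchanged, since $|M_2\rangle=|0\cdots0\rangle$ is a perfectly good tensor factor.

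\textbf{One correction.} For the same reason, your closing aside that $g_{|M\rangle}$ is ``hence reducible'' is false in the loop case. Either drop it or restrict it to decompositions in which both summands have positive rank.
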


\begin{proof}
 The matroid-base state $|M\rangle$ is a matroid-supported state, by Theorem~\ref{th:matroid-supported-state}, the sufficient condition is proved.

 We now prove the necessary condition. Since $|M\rangle$ is genuinely entangled, by Lemma~\ref{le:entanglement-by-irreducibility}, its generating polynomial $g_{|M\rangle}$ is irreducible. 

 Assume that $M$ is disconnected, let $M=M_1\oplus M_2$. Then the multiaffine polynomial $g_{|M\rangle}$ can be rewritten as
 \begin{equation}
 \begin{array}{rcl}
     g_{|M\rangle} (x)&=&\frac{1}{\sqrt{|\mc{B}(M)|}}\sum_{B\in\mc{B(M)}}x^B\\
      &=&\frac{1}{\sqrt{|\mc{B}(M)|}}\sum_{(B_1,B_2)\in\mc{B}(M_1)\otimes\mc(B)(M_2)}x^{B_1\sqcup B_2}\\
      &=&\frac{1}{\sqrt{|\mc{B}(M)|}}(\sum_{B_1\in\mc{B}(M_1)}x^{B_1})\;\cdot\\
      &&(\sum_{B_{2}\in\mc{B}(M_2)}x^{B_2}).
 \end{array}
 \end{equation}
This contradicts the irreducibility of $g_{|M\rangle}$, and the assumption does not hold.

Hence, the matroid $M$ is connected, completing the proof of necessity.
\end{proof}

Although the quantum states $|\Psi^{\pm}\rangle$, $|W_n\rangle$ and $|D^n_k\rangle$ in Example~\ref{ex:support-set-state} are already known to be genuinely entangled, their GME can be elegantly elucidated within the framework of matroid theory by demonstrating that the support of each state corresponds to the collection of bases of a connected matroid.

\begin{corollary}\label{co:dicke}
The Bell state $|\Psi^\pm\rangle=\frac{1}{\sqrt{2}}(|10\rangle\pm|01\rangle)$, the $n$-qubit $\mc{W}$ state $|\mc{W}_n\rangle=\frac{1}{\sqrt{n}}(|10\cdots0\rangle+|010\cdots0\rangle+\cdots+|00\cdots01\rangle)$, or, more generally, the $(n,k)$ Dicke state $|D^{n}_{k}\rangle=\frac{1}{\sqrt{\binom{n}{k}}}\sum_{S\subseteq[n]\;|\;|S|=k}|S\rangle$ is a genuinely entangled state for $1\leq k\leq n-1$.
\end{corollary}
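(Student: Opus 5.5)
The plan is to derive the corollary directly from Theorem~\ref{th:matroid-base-state}. Each of the three states is the matroid-base state of the uniform matroid $U_{k,n}$ for a suitable $k$: $|\Psi^\pm\rangle$ corresponds to $U_{1,2}$, $|\mc{W}_n\rangle$ to $U_{1,n}$, and $|D^n_k\rangle$ to $U_{k,n}$. Since the independent sets of $U_{k,n}$ are all subsets of size at most $k$, its bases are exactly the $k$-subsets of $[n]$. Hence $|D^n_k\rangle=|U_{k,n}\rangle$ in the sense of Definition~\ref{def:matroid-base-state}.

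The relative sign in $|\Psi^-\rangle$ is not a uniform superposition, so for that state I would invoke Theorem~\ref{th:matroid-supported-state} rather than Theorem~\ref{th:matroid-base-state}. That theorem only requires the support to be $\mc{B}(U_{1,2})$, and it gives genuine entanglement directly once connectedness is established. In fact Theorem~\ref{th:matroid-supported-state} covers all three cases uniformly, so the whole corollary reduces to one claim: $U_{k,n}$ is connected whenever $1\le k\le n-1$.

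To prove that claim I would use Lemma~\ref{le:connected-matroid}. The circuits of $U_{k,n}$ are the minimal dependent sets, which are exactly the $(k+1)$-subsets of $[n]$. Since $k+1\le n$, such subsets exist. Given distinct $x,y\in[n]$, pick any $(k+1)$-subset containing both, which is possible because $k+1\ge 2$. That subset is a circuit through $x$ and $y$, so $U_{k,n}$ is connected.

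Alternatively one could check Definition~\ref{de:non-separable-matroid} directly. For a nonempty proper subset $A$, we have $r(A)+r([n]\setminus A)=\min(|A|,k)+\min(n-|A|,k)$, and a short case check shows this exceeds $k$ when $1\le k\le n-1$.

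The only real subtlety is the boundary condition. One needs $k\ge1$ so that circuits have size at least $2$, and $k\le n-1$ so that circuits exist at all. Otherwise every element is a loop or a coloop, and the states $|0\rangle^{\otimes n}$ and $|1\rangle^{\otimes n}$ are product states, consistent with the excluded range.
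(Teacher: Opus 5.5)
Your proof is correct and follows the same route as the paper: identify each support with $\mc{B}(U_{k,n})$, show that $U_{k,n}$ is connected for $1\le k\le n-1$, and apply the entanglement theorem. Invoking Theorem~\ref{th:matroid-supported-state} instead of Theorem~\ref{th:matroid-base-state} is slightly more careful than the paper, which calls $|\Psi^\pm\rangle$ special Dicke states even though $|\Psi^-\rangle$ is not a uniform superposition; your circuit (or rank) argument also proves the connectedness of $U_{k,n}$, which the paper only asserts.
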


\begin{proof}
It is not difficult to see that the Bell state $|\Psi^\pm\rangle$ and $\mc{W}$ state $|W_n\rangle$ are special Dicke states. 

 Since the support set $\supp(|D^{n}_{k}\rangle)$ of $|D^{n}_{k}\rangle$ is the collection of bases  of the unifom matroid $U_{k,n}$. That is to say, $|D^{n}_{k}\rangle$ is a matroid-base state. For $k\neq0$ and $k\neq n$, the uniform matroid $U_{n,k}$ is connected. So, by Theorem~\ref{th:matroid-base-state}, we can prove that $|D^{n}_{k}\rangle$ is a genuinely entangled state.
\end{proof}

\subsection{Quantum measurement and matroid minor}

%A compelling structural parallel exists between the process of projective measurement in quantum mechanics and the minor operation in matroid theory. Both operations function as irreversible, structure-reducing processes that condition a global system on local information.

%In quantum mechanics, a projective measurement acts on a state in superposition by projecting its vector onto an eigenspace associated with a specific measurement outcome. This process collapses the state from a complex linear combination of all possible basis states into a new state confined to a subspace that is consistent with the observation.

%Analogously, the matroid minor operation reduces the complexity of a matroid by conditioning its structure on a subset of its ground elements.

%This reveals a formal correspondence: quantum measurement acts on a state by restricting it to a specific subspace, just as the minor operation acts on a matroid by restricting its family of bases. In both contexts, a global structure (a superposition of all bases or the full set of dependencies) is reduced to a more localized form that reflects newly acquired information.

%In this paper, we only consider measurement in the standard computational basis.

%First, let's look at the connection between quantum measurement and matroid minor in the simplest case --- measuring a qubit and deleting/contracting an element.

A compelling structural analogy can be drawn between quantum measurement and the minor operation in matroid theory.  Both processes are inherently irreversible, reduce the effective size or dimensionality of the system, and yield only partial information about the original structure.

In quantum mechanics, a measurement projects the quantum state onto a lower-dimensional subspace of the Hilbert space, thereby collapsing the state and discarding part of the global information. Analogously, taking a minor of a matroid—by deleting or contracting some elements—produces a new matroid of smaller rank, representing a reduction of the original dependency structure. 
From this perspective, both quantum measurement and matroid minors embody a form of structural restriction or dimensional reduction. In this section, we investigate the intrinsic correspondence between these two notions.

Throughout this discussion, we restrict our attention to the measurement in the $Z$-basis. 

We begin with the simplest case: measuring a single qubit and taking the corresponding matroid minor obtained by deleting or contracting one element.

\begin{lemma}\label{le:measurement-minor-1}
Let $|\psi\rangle=\sum_{S\subseteq2^{[n]}}\alpha_S|S\rangle$ be a matroid-supported state with supporting matroid $M=([n], \mc{B})$ and $e\in[n]$. If the qubit labeled by $e$ is being measured,
\begin{enumerate}
\item [(1)] If the measurement result is 0, then the quantum state collapses to a matroid-supported state with supporting matroid $M\backslash{e}$;
\item [(2)] If the measurement result is 1, then the quantum state collapses to a matroid-supported state with supporting matroid $M/{e}$.
\end{enumerate}
\end{lemma}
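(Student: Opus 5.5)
Write $|\psi\rangle=\sum_{B\in\mc{B}}\alpha_B|B\rangle$, since by hypothesis $\alpha_S\neq 0$ exactly when $S\in\mc{B}$. Measuring qubit $e$ in the $Z$-basis applies the projector $|0_e\rangle\langle 0_e|$ or $|1_e\rangle\langle 1_e|$ and then renormalizes. For outcome $0$ the post-measurement state is proportional to $\sum_{B\in\mc{B},\,e\notin B}\alpha_B|B\rangle$. For outcome $1$ it is proportional to $\sum_{B\in\mc{B},\,e\in B}\alpha_B|B\rangle$. In both cases qubit $e$ is left in a definite product factor $|0_e\rangle$ or $|1_e\rangle$. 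We discard that factor and regard the result as a state $|\psi'\rangle$ on the qubits $[n]-e$. Its support is then read off directly: it is $\{B\in\mc{B}: e\notin B\}$ in the first case and $\{B-e: B\in\mc{B},\,e\in B\}$ in the second. No coefficients cancel, because each basis vector appears exactly once. The proof therefore reduces to showing that these supports equal $\mc{B}(M\backslash e)$ and $\mc{B}(M/e)$, respectively, whenever the outcome occurs with nonzero probability.

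We do this by splitting into the three cases of the deletion and contraction definitions.

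\emph{$e$ is neither a loop nor a coloop.} Then the two supports above coincide verbatim with the third clauses of the definitions of $\mc{B}(M\backslash e)$ and $\mc{B}(M/e)$. Both outcomes have positive probability, since some base avoids $e$ and some base contains $e$.

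\emph{$e$ is a loop.} No base contains $e$, so outcome $1$ has probability zero and only case (1) is relevant. For outcome $0$ the state is unchanged, $|\psi'\rangle=|\psi\rangle$ with qubit $e$ in $|0\rangle$ removed. Its support is $\mc{B}$, viewed as subsets of $[n]-e$, which is exactly $\mc{B}(M\backslash e)$. We also note that $M/e$ has the same base family in this case, so a statement about outcome $1$ would hold vacuously.

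\emph{$e$ is a coloop.} Every base contains $e$, so outcome $0$ never occurs. For outcome $1$ the state is unchanged apart from removing $e$, and its support is $\{B-e: B\in\mc{B}\}=\mc{B}(M/e)=\mc{B}(M\backslash e)$.

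The main care point is this degenerate handling. One must state that an outcome of zero probability is not realized, and one must identify the post-measurement state on the reduced system $[n]-e$ rather than on $[n]$. The fact that $M\backslash e$ and $M/e$ are matroids is taken as given from the paper, so no further verification of (B1) is needed. Consequently each resulting $|\psi'\rangle$ is a matroid-supported state in the sense of Definition~\ref{def:matroid-supported-state}, with supporting matroid $M\backslash e$ or $M/e$ as claimed.
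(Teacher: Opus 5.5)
Your proof is correct and follows the same route as the paper: split $\mathcal{B}$ according to whether $e\in B$, write the post-measurement state for each outcome, and identify its support on $[n]-e$ with $\mathcal{B}(M\backslash e)$ or $\mathcal{B}(M/e)$. The only difference is one of emphasis. The paper re-verifies the exchange axiom (B1) for $\mathcal{B}_{\overline{e}}$ and $\mathcal{B}_e$, and it handles loops and coloops only implicitly through its ``if nonempty'' proviso. You instead take the matroid property of $M\backslash e$ and $M/e$ as given, and you treat the degenerate cases and the zero-probability outcomes explicitly against the case-wise definitions.
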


\begin{proof}
Given a $e\in[n]$, we partition $\mc{B}$ into two disjoint unions: $\mc{B}=\mc{B}_{\overline{e}}\;\sqcup\;\mc{B}_e$, where $\mc{B}_{\overline{e}}=\{B\in\mc{B}\;|\;e\notin B\}$ and $\mc{B}_{e}=\{B\in\mc{B}\;|\;e\in B\}$. We show that if $\mc{B}_{\overline{e}}$( $\mc{B}_e$) is nonempty, then $\mc{B}_{\overline{e}}$($\mc{B}_e$) is the collection of bases of a matroid on $[n]-e$. We only give proof for $\mc{B}_{\overline{e}}$, the proof for $\mc{B}_e$ is similar.

For any $B_1,B_2\in\mc{B}_{\overline{e}}$, if $x\in B_1 - B_2$, by the Base Axioms (Lemma~\ref{th:base-axioms}), there exists $y\in B_2-B_1$ such that $B_1-x+y\in\mc{B}$. Since $e\notin B_1$ and $e\notin B_2$, so $e\notin B_1-x+y$, meaning that $B_1-x+y\in\mc{B}_{\overline{e}}$. Applying the Base Axioms again, we know that $\mc{B}_{\overline{e}}$ does be the collection of bases of some matroid. In fact, this matroid is $M\backslash{e}$.

Similarly, we can prove that $\mc{B}_e$ is the collection of bases of $M/e$.

Now, we rewrite the state $|\psi\rangle$ as
\begin{equation}\label{eq:rewrite-psi}
    |\psi\rangle=a|0_e\rangle|\psi\backslash{e}\rangle+b|1_e\rangle|\psi/e\rangle,
\end{equation}
\noindent{}where $a=\sqrt{\sum_{S\in\mc{B}_{\overline{e}} }|\alpha_S|^2}$, $b=\sqrt{\sum_{S\in\mc{B}_e}|\alpha_S|^2}$, $|\psi\backslash{e}\rangle=\frac{1}{a}\sum_{S\in\mc{B}_{\overline{e}}}\alpha_S|S\rangle$ and $|\psi/e\rangle=\frac{1}{b}\sum_{S\in\mc{B}_e}\alpha_S|S-e\rangle$.

If we perform a measurement on the qubit labeled by $e$ in the state $|\psi\rangle$, by Equation~(\ref{eq:rewrite-psi}), we observe the followings:
\begin{itemize}
\item If the measurement outcome is $0$, the quantum state collapses to $|\psi\backslash{e}\rangle$ is still a matroid-supported state with supporting matroid $M\backslash{e}$ --- the minor of $M$ obtained by deleting $e$.
\item If the measurement outcome is $1$, the quantum state collapses to $|\psi/e\rangle$ is still a matroid-supported state with supporting matroid $M/e$ --- the minor of $M$ obtained by contracting $e$.
\end{itemize}
\end{proof}

From Lemma~\ref{le:measurement-minor-1}, we can see the natural correspondence between the result of measuring a single qubit and a single element  deletion/contraction of matroid. More generally, if we measure multiple qubits, what is the relationship (in the sense of matroids) between the collapsed quantum state and the original quantum state? We describe this relationship as the following theorem.

\begin{theorem}\label{th:meaurement-minor}
Let $|\psi\rangle$ be a matroid-supported state with supporting matroid $M=([n],\mc{B})$. Let $T_0$ and $T_1$ represent the qubits measured resulting 0 and 1 respectively, then the resulting quantum state (over the unmeasured qubits) is still a matroid-supported state supported by matroid $M\backslash{T_0}/T_1$, the minor of $M$ obtained by deleting $T_0$ and contracting $T_1$. (See Figure~\ref{fig:measurement-minor}.)
\end{theorem}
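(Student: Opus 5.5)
The plan is to induct on $|T_0|+|T_1|$, using Lemma~\ref{le:measurement-minor-1} as the inductive step and Lemma~\ref{le:minor-commute} to make the order of measurements irrelevant. Write $T_0\cup T_1=\{e_1,\dots,e_k\}$ in the order the qubits are measured. Since $Z$-measurements on distinct qubits commute, the post-measurement state conditioned on the full outcome string does not depend on the order. Equivalently, it is the normalized projection
\[
\frac{1}{N}\sum_{S\in\supp(|\psi\rangle),\,T_1\subseteq S,\,S\cap T_0=\emptyset}\alpha_S\,|S\setminus T_1\rangle ,
\]
so it suffices to analyze sequential single-qubit measurements.

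For the inductive step, suppose that after measuring $e_1,\dots,e_{j}$ the collapsed state on $[n]\setminus\{e_1,\dots,e_j\}$ is a matroid-supported state with supporting matroid $M_j$, where $M_j$ is obtained from $M$ by deleting the elements of $T_0\cap\{e_1,\dots,e_j\}$ and contracting those of $T_1\cap\{e_1,\dots,e_j\}$. Measuring $e_{j+1}$ on this state is exactly the situation of Lemma~\ref{le:measurement-minor-1}, applied to $M_j$ on the reduced ground set. Hence the new state is supported by $M_j\backslash e_{j+1}$ if the outcome is $0$, and by $M_j/e_{j+1}$ if it is $1$. After $k$ steps, the state is supported by a matroid obtained from $M$ by an interleaved sequence of deletions (of $T_0$) and contractions (of $T_1$). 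By Lemma~\ref{le:minor-commute} these operations commute, so this matroid equals $M\backslash T_0/T_1$ and is well defined regardless of order.

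The main obstacle is a subtlety in Lemma~\ref{le:measurement-minor-1}, and this point needs care. The post-measurement support is $\{B\in\mc{B}_j : e\notin B\}$ (respectively $\{B-e : e\in B\}$), which coincides with $\mc{B}(M_j\backslash e)$ (respectively $\mc{B}(M_j/e)$) only when $e$ is neither a loop nor a coloop of $M_j$. If $e$ is a coloop, every base contains $e$. Then outcome $0$ has probability zero, and outcome $1$ gives support $\{B-e\}$, which equals $\mc{B}(M_j/e)$ by the coloop clause of the definition. The loop case is symmetric. So for every outcome that occurs with nonzero probability, the resulting support matches the definitions of deletion and contraction, including the loop and coloop clauses.

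The inductive hypothesis should therefore be phrased as applying to every outcome sequence of positive probability. One must also check that positive probability at each step is exactly the condition that the relevant family $\mc{B}_{\overline{e}}$ or $\mc{B}_e$ is nonempty, which is what Lemma~\ref{le:measurement-minor-1} requires. Finally, one verifies that renormalization does not change the support, since it only rescales the nonzero coefficients. This completes the induction.
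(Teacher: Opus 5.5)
Your proposal is correct and follows essentially the same route as the paper: induction on the number of measured qubits, with Lemma~\ref{le:measurement-minor-1} as the single-qubit step, the order-independence of the $Z$-measurements, and Lemma~\ref{le:minor-commute} to combine the interleaved deletions and contractions into $M\backslash T_0/T_1$. Your explicit handling of loops, coloops and zero-probability outcomes (including elements that only become loops or coloops in an intermediate minor $M_j$) spells out a step the paper's Lemma~\ref{le:measurement-minor-1} asserts without checking; however, your ``only when'' slightly overstates the exceptional cases, because the generic formulas $\{B: e\notin B\}$ and $\{B-e: e\in B\}$ already agree with the loop and coloop clauses whenever the corresponding outcome has positive probability, so the only truly exceptional cases are the zero-probability outcomes you then exclude.
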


%\begin{turnpage}
\begin{figure}[htbp]
\centering
\begin{tikzpicture}
% 嵌套椭圆
\node[] (psi) at (0,1.8) {$|\psi\rangle$};
\node[] (psi1) at (4,1.8) {$|\psi'\rangle$};
\node[] (M) at (0,0) {$M$};
\node[] (M1) at (4,0) {$M\backslash{T_0}/T_1$};
\node[] (T) at (2,0.9) {$\{T_0, T_1\}$};

\draw[<->] (psi) -- (M);
\draw[<->] (psi1) -- (M1);
\draw[->] (psi) --node[pos=0.5, above]{Measurement} (psi1);
\draw[->] (M) -- node[pos=0.5, below] {Minor} (M1);

\draw[->] ($(psi)!0.5!(psi1)$) -- (T) ;
\draw[->] (T)-- ($(M)!0.5!(M1)$);
\end{tikzpicture}
  \caption{Quantum measurement and matroid minors. A matroid-supported state $|\psi\rangle$ supported by matroid $M$. $T_0$ and $T_1$ denote the qubits with measurement results of 0 and 1 in the $Z$-basis, respectively. After measurement, the matroid-supported state collapse to $|\psi'\rangle$ is still a matroid-supported state supported by matroid $M\backslash{T_0}/T_1$.}
  \label{fig:measurement-minor}
\end{figure}
%\end{turnpage}

\begin{proof}
Let $T=T_0 \cup T_1$, when $|T|=1$, the results is as described in Lemma~\ref{le:measurement-minor-1}. We now use induction on the size of $T$. The correctness of our induction is based on the following two facts.
\begin{fact}\label{fa:deletion-contraction-commute}
The deletion and contraction of matroid commute with each other and with themselves. (Lemma~\ref{le:minor-commute})
\end{fact}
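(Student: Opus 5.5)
The plan is to prove Lemma~\ref{le:minor-commute}, which is exactly the content of Fact~\ref{fa:deletion-contraction-commute}, working directly from the base-set definitions of deletion and contraction. First I would put those definitions in a uniform form. For a single element $e$,
\[
\mc{B}(M\backslash e)=\{B-e \mid B\in\mc{B},\ |B\cap\{e\}| \text{ minimal over } \mc{B}\},
\]
\[
\mc{B}(M/e)=\{B-e \mid B\in\mc{B},\ |B\cap\{e\}| \text{ maximal over } \mc{B}\}.
\]
I would check this against the three cases of each definition:
\begin{itemize}
\item if $e$ is a loop, no base contains $e$, so every base is optimal and removing $e$ changes nothing;
\item if $e$ is a coloop, every base contains $e$, so every base is optimal and we get $\{B-e\}$;
\item otherwise both values $0$ and $1$ occur, and the formulas give $\{B: e\notin B\}$ for deletion and $\{B-e : e\in B\}$ for contraction.
\end{itemize}
Iterating, each side of (1)--(3) has bases $\{B-\{e,f\}\}$, where $B$ runs over the bases that are lexicographically optimal for two linear objectives in a fixed order. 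For example, $(M\backslash e)/f$ first minimizes $|B\cap\{e\}|$ and then, among the survivors, maximizes $|B\cap\{f\}|$. (The survivors form the base family of $M\backslash e$ up to removing $e$, and $B\mapsto B-e$ is injective on them, so the second optimization inside $M\backslash e$ is the same as optimizing over the survivors.)

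So each commutation reduces to one claim: with weights $w_e,w_f\in\{+1,-1\}$ ($-1$ for delete, $+1$ for contract), the lexicographic optimum of $(w_e|B\cap\{e\}|,\, w_f|B\cap\{f\}|)$ over $\mc{B}$ selects the same bases as the lexicographic optimum taken in the order $(f,e)$. I would prove this by showing that both orders select exactly the bases maximizing the single sum $w_e|B\cap\{e\}|+w_f|B\cap\{f\}|$. The case $w_e=w_f$ covers (1) and (2), and $w_e\neq w_f$ covers (3).

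The key step, and the main obstacle, is this: if $B$ is optimal for the first coordinate, then it attains the best possible value of that coordinate together with the best possible value of the second coordinate. Concretely, suppose $B_1$ maximizes $w_e|B\cap\{e\}|$ but is beaten in the sum by some $B_2$. Then $B_2$ does at least as well on $e$ and strictly better on $f$. I would use the base exchange axiom (B1) on the element of $(B_1\triangle B_2)\cap\{e,f\}$ to produce a base $B_3$ that agrees with $B_1$ on $e$ and agrees with $B_2$ on $f$. This is the delicate part, because the element $y$ supplied by (B1) might be $e$ itself. In that case I would swap the roles of $B_1$ and $B_2$ or apply (B1) in the other direction, and check the few remaining cases by hand (each of $e$ and $f$ is in or out of $B_1$ and $B_2$).

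Granting this, the first-stage optimum and the joint optimum coincide. Hence both lexicographic orders give the set of bases maximizing the weighted sum. This is symmetric in $e$ and $f$, which proves (1)--(3), and the equality of matroids follows from equality of their base families.
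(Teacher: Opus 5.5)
\textbf{There is a genuine gap.} The claim you reduce everything to is false for (1) and (2). Take $M=U_{1,2}$ on $\{e,f\}$, with bases $\{e\}$ and $\{f\}$. When contracting both ($w_e=w_f=+1$), the order $(e,f)$ keeps only $\{e\}$ and the order $(f,e)$ keeps only $\{f\}$, while the weighted sum is maximized by both bases. Deleting both gives the mirror image. So the two orders select \emph{disjoint} families of bases of $M$. The identity $(M/e)/f=(M/f)/e$ holds here only because both families collapse to $\{\emptyset\}$ once $e$ and $f$ are removed.

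In general, a base that is optimal in both coordinates at once fails to exist precisely in two situations. For contraction--contraction, it fails when $e,f$ are parallel, since no base contains both. For deletion--deletion, it fails when $\{e,f\}$ is a series pair (a cocircuit), since no base avoids both. No case analysis with (B1) can produce your $B_3$ in these cases.

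Your exchange step as written also carries no information. If $B_1$ maximizes the $e$-coordinate and $B_2$ beats it in the sum, then, since each coordinate changes by at most $1$, $B_2$ already agrees with $B_1$ on $e$, so $B_3=B_2$. What this argument shows is only that each lexicographic optimum is contained in the set of sum-maximizers. The reverse inclusion is what ``both orders give the same set'' needs, and it is exactly what fails.

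\textbf{What survives, and how to repair (1) and (2).} Your reduction does work for (3). Suppose $e$ is not a coloop and $f$ is not a loop, and take a base $B_2\ni f$ and a base $B_1\not\ni e$. If $e\in B_2$, applying (B1) to $e\in B_2\setminus B_1$ yields a base containing $f$ but not $e$, so a simultaneous optimum exists.

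For (1) and (2), you must compare the images $B\setminus\{e,f\}$ rather than the selected bases themselves. If $e,f$ are parallel, then $\{e,f\}$ is the unique circuit in $B+f$ for every base $B\ni e$. Hence $B\mapsto B-e+f$ is a bijection from $\{B\in\mathcal{B}: e\in B\}$ onto $\{B\in\mathcal{B}: f\in B\}$ that preserves $B\setminus\{e,f\}$. The series case is the dual statement, via $M\backslash e=(M^*/e)^*$.

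Alternatively, follow the standard proof in the references the paper cites for Lemma~\ref{le:minor-commute}; the paper gives no argument of its own. That proof uses $r_{M\backslash X/Y}(Z)=r(Z\cup Y)-r(Y)$ for $Z$ disjoint from $X\cup Y$. All three identities are then immediate, because every sequence of single-element deletions of $X$ and contractions of $Y$ produces this same rank function, whatever the order.
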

\begin{fact}\label{fa:measurement-commute}
If we know the initial state and the state after the measurement (on the unmeasured qubits), then measurement order (on the measured qubits) does not matter.
\end{fact}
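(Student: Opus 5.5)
The plan is to make Fact~\ref{fa:measurement-commute} precise and then verify it directly from the projective description of $Z$-basis measurement. Fix the set $T=T_0\sqcup T_1$ of measured qubits and the outcomes: $0$ on $T_0$ and $1$ on $T_1$. For each $e\in T$ with outcome $a_e\in\{0,1\}$, let $P^{(e)}_{a_e}=|a_e\rangle\langle a_e|_e\otimes I_{[n]-e}$ be the corresponding projector. The claim to prove is the following. For every ordering $e_1,\dots,e_k$ of $T$, applying the single-qubit measurements one after another and renormalizing after each step gives the same final state. It also gives the same overall probability for the outcome sequence. Moreover, this final state is determined by $|\psi\rangle$ and the data $(T_0,T_1)$ alone.

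\textbf{Commutation of projectors.} The projectors $P^{(e)}_{a_e}$ for distinct $e$ act on different tensor factors, so they commute pairwise. Hence $\Pi:=\prod_{e\in T}P^{(e)}_{a_e}$ is well defined independently of the order of the factors.

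\textbf{Unnormalized state after sequential measurement.} After measuring $e_1,\dots,e_j$, the state is
$$P^{(e_j)}_{a_{e_j}}\cdots P^{(e_1)}_{a_{e_1}}|\psi\rangle\big/\big\|P^{(e_j)}_{a_{e_j}}\cdots P^{(e_1)}_{a_{e_1}}|\psi\rangle\big\|.$$
This follows by a short induction on $j$, because a positive scalar normalization commutes with the next linear projection and is absorbed into the next normalization. The probability of the observed outcome sequence is the telescoping product of conditional probabilities, which equals $\|\Pi|\psi\rangle\|^2$. Both quantities depend only on $\Pi$, so neither depends on the order.

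\textbf{Computational-basis form.} Writing $|\psi\rangle=\sum_S\alpha_S|S\rangle$ gives
$$\Pi|\psi\rangle=\sum_{S:\,T_1\subseteq S,\;S\cap T_0=\emptyset}\alpha_S|S\rangle=|1\rangle^{\otimes T_1}|0\rangle^{\otimes T_0}\otimes\sum_{S:\,T_1\subseteq S,\;S\cap T_0=\emptyset}\alpha_S|S\backslash T_1\rangle .$$
The measured qubits therefore sit in a fixed product state. The state on the unmeasured qubits is the normalized second factor, which is expressed purely in terms of $|\psi\rangle$, $T_0$ and $T_1$. This is exactly the order-independence asserted in Fact~\ref{fa:measurement-commute}.

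The only delicate point is the case where some intermediate outcome has probability zero, so that a renormalization would divide by zero. I would handle it by noting that an outcome sequence occurs with positive probability iff $\Pi|\psi\rangle\neq0$. The prefix norms are nonincreasing, since $\|P\,v\|\le\|v\|$ for a projector $P$, and the last prefix norm is $\|\Pi|\psi\rangle\|$. So whenever the final outcome is realized, every intermediate norm is positive in every ordering, and all the renormalizations are legitimate. In the matroid-supported setting this corresponds to the bases surviving the conditions being nonempty.
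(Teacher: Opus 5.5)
Your argument is correct. The paper gives no proof of this Fact: it is asserted as self-evident inside the induction for Theorem~\ref{th:meaurement-minor}, and its wording (``if we know the initial state and the state after the measurement'') is never made precise.

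Your proposal supplies both pieces the paper omits. First, the formalization: for fixed outcome data $(T_0,T_1)$, the collapsed state and the probability of the outcome sequence do not depend on the order of the measurements. Second, a complete proof of it:
\begin{itemize}
\item the single-qubit projectors act on distinct tensor factors and therefore commute;
\item intermediate normalizations are absorbed, so the sequential result is $\Pi|\psi\rangle/\|\Pi|\psi\rangle\|$;
\item the conditional probabilities telescope to $\|\Pi|\psi\rangle\|^2$;
\item the prefix norms are nonincreasing, which rules out division by zero whenever the outcome actually occurs.
\end{itemize}

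Your closed form also gives something beyond the Fact. It shows directly that the collapsed state has support $\{B\backslash T_1 : B\in\mc{B},\ T_1\subseteq B,\ B\cap T_0=\emptyset\}$. When this collection is nonempty, i.e.\ exactly when $T_1$ is independent and $T_0$ is coindependent, it is precisely the collection of bases of $M\backslash T_0/T_1$. So Theorem~\ref{th:meaurement-minor} would follow in one step from the standard description of the bases of a minor, with no induction on $|T|$.
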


Assume $|T|=t+1$ ($t>0$ be an integer),  let $T'\subset T$  be the first $t$ measured qubits and $e$ be the last measured qubit (Note that he order of these measurements does not matter, and its plausibility stems from Fact~\ref{fa:measurement-commute}). By the induction hypothesis, a local measurement in the $Z$-basis determined by $T'$, the resulting matroid-supported state $|\psi'\rangle$ has $M\backslash{}T'_0/T'_1$ as the supporting matroid, where $T'_0$ and $T'_1$ are disjoint partitions of $T'$ represent the qubits measured resulting 0 and 1, respectively.

When we continue to measure the qubit labeled by $e$, it is equivalent to measuring a single  qubit in the initial state $|\psi'\rangle$. By Lemma~\ref{le:measurement-minor-1},
\begin{itemize}
\item If the measurement result is $0$, the quantum state collapses to a matroid-supported state with supporting matroid $(M\backslash{}T'_0/T'_1)\backslash{e}$. By the Fact~\ref{fa:deletion-contraction-commute}, the associated matroid is $M\backslash{}(T'_0+e)/T'_1$.
\item Similarly, if the measurement result is $1$, the quantum state collapses to a matroid-supported state with supporting matroid $M\backslash{}T'_0/(T'_1+e)$.
\end{itemize}
\end{proof}

\subsection{Dual states}
Duality is a fundamental and versatile concept that reveals the underlying symmetry of mathematical and physical structures. In quantum mechanics, duality appears in many forms, such as wave–particle duality, the Fourier duality between position and momentum. These dual formulations often expose hidden symmetries, facilitate structural analysis, and provide alternative yet equivalent perspectives on the same physical phenomena. Especially in matroid theory, duality is a core concept. Inspired by this, we define the dual state of a quantum state.

\begin{definition}[Dual state]\label{def:dual-quantum-state}
Given any quantum state $|\psi\rangle = \sum_{S\subseteq[n]}\alpha_S|S\rangle$, we define its dual state as
\begin{equation}
    |\psi\rangle^*=X^{\otimes n}|\psi\rangle=\sum_{S\subseteq[n]}\alpha_S|\overline{S}\rangle,
\end{equation}
where $X$ is the Pauli $X$ gate.
\end{definition}

\begin{definition}[Self-dual state]
For any quantum state $|\psi\rangle$, we call $|\psi\rangle$  a \emph{self-dual state}
if $|\psi\rangle^*=|\psi\rangle$.
\end{definition}
Self-dual states have good symmetry and often appear in quantum information and quantum computing, such as Bell states, GHZ states, and uniform superposition states (of all computational bases).

\begin{proposition}\label{le:dual-state-engtangled}
For any quantum state $|\psi\rangle$, $|\psi\rangle$ is genuinely entangled if and only if $|\psi\rangle^*$ is genuinely entangled.
\end{proposition}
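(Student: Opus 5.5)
The plan is to use the fact that $X^{\otimes n}$ is a product of single-qubit unitaries, so it maps product states across any bipartition to product states across the same bipartition. Since $X^{\otimes n}$ is an involution, $(|\psi\rangle^*)^*=|\psi\rangle$. It therefore suffices to prove one direction: if $|\psi\rangle$ is biseparable, then $|\psi\rangle^*$ is biseparable. The converse then follows by applying the same implication to $|\psi\rangle^*$. Taking contrapositives gives the stated equivalence for genuine entanglement.

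For the key step, suppose $|\psi\rangle=|\psi_T\rangle\otimes|\psi_{[n]\backslash T}\rangle$ for some nonempty proper subset $T\subsetneq[n]$. We write $X^{\otimes n}=X^{\otimes T}\otimes X^{\otimes([n]\backslash T)}$, after reordering tensor factors consistently with the labeling of qubits. This gives
\begin{equation*}
|\psi\rangle^*=\bigl(X^{\otimes T}|\psi_T\rangle\bigr)\otimes\bigl(X^{\otimes([n]\backslash T)}|\psi_{[n]\backslash T}\rangle\bigr),
\end{equation*}
and each factor is a normalized state on the qubits labeled by $T$ and by $[n]\backslash T$ respectively. Hence $|\psi\rangle^*$ is biseparable with respect to the same cut $T$.

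As a cross-check in the language of the paper, I would also phrase the argument via Lemma~\ref{le:entanglement-by-irreducibility}. The generating polynomial of $|\psi\rangle^*$ is $\sum_S\alpha_S x^{\overline S}=x^{[n]}\,g_{|\psi\rangle}(1/x_1,\dots,1/x_n)$, a multiaffine reversal. A factorization $g_{|\psi\rangle}=f\cdot h$ with $f,h$ in disjoint variable sets $A$ and $B$, as guaranteed by Lemma~\ref{le:multiaffine-product}, yields the factorization $(x^{A}f(1/x))\cdot(x^{B}h(1/x))$. Both factors are again non-constant multiaffine polynomials.

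The only point needing care, and what I expect to be the main obstacle, is non-constancy in the polynomial version: the reversed factors must stay non-constant. This holds because $A$ and $B$ are nonempty. A factor in the variables of $A$ can reverse to a constant only if its support is exactly $\{A\}$, so the factor is a single monomial $c\,x^A$. In that case the other reversed factor still depends on the variables of $B$, so the product remains a genuine factorization. The tensor-product argument avoids this subtlety entirely, so I will present that argument as the proof and mention the polynomial form only as a remark.
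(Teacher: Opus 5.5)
Your tensor-product argument is correct and is the paper's own proof made explicit. $X^{\otimes n}$ is a product of single-qubit unitaries, so it preserves a factorization across any cut $T$, and the involution $(|\psi\rangle^*)^*=|\psi\rangle$ gives the converse.

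Your polynomial remark, however, handles the edge case you identified incorrectly. If $f=c\,x^A$, the reversed factor $x^A f(1/x)$ is the constant $c$, so the product is \emph{not} a nontrivial factorization. Reversal can in fact destroy reducibility. For $|\psi\rangle=|1\rangle\otimes|+\rangle$ one has $g_{|\psi\rangle}=\tfrac{1}{\sqrt{2}}\,x_1(1+x_2)$, which is reducible. But $g_{|\psi\rangle^*}=\tfrac{1}{\sqrt{2}}(1+x_2)$ is irreducible, even though $|\psi\rangle^*=|0\rangle\otimes|+\rangle$ is a product state.

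The same example refutes the paper's alternative claim that $g_{|\psi\rangle^*}$ ``has the same reducibility'' as $g_{|\psi\rangle}$. It also shows that Lemma~\ref{le:entanglement-by-irreducibility} fails when a tensor factor is $|0\cdots0\rangle$. So you are right to rest the proof on the local-unitary argument. Drop or correct the remark rather than keeping it as a cross-check.
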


\begin{proof}
Since Pauli $X$ is a single qubit gate, it does not change the GME of a quantum state. According to the definition of the dual state, it has the same GME as the original state.   
\end{proof}

Alternatively, the result can be established by analyzing the generating polynomial of the quantum state. Since for $S\subseteq [n]$, $x^{\overline{S}}=x^{[n]}\cdot(\frac{1}{x})^S$. Note that for an n-tuple $x=(x_1,\dots,x_n)$, $\frac{1}{x}$ represents $(\frac{1}{x_1},\cdots,\frac{1}{x_n})$, and $(\frac{1}{x})^S=\prod_{i\in S}\frac{1}{x_i}$. It is easy to obtain the generating polynomial of $|\psi\rangle^*$,
\begin{equation}
g_{|\psi\rangle^*}(x)=x^{[n]}p_{|\psi\rangle}(\frac{1}{x}).
\end{equation}
And it has the same reducibility as $g_{|\psi\rangle}$.

\textbf{Remark}. In fact, the dual state we define is local unitary (LU) equivalent to the original state. Therefore, Proposition~\ref{le:dual-state-engtangled} is a well-known result \cite{nielsen2010quantum,horodecki2009quantum,monras2011entanglement}. We revisit this result in order to underscore the fundamental role played by duality.

For matroid-supported states, the properties of their dual states can be systematically explained through the powerful framework of matroid duality. This provides a direct mapping from quantum operations on the state to well-understood combinatorial operations on the underlying matroid.

\begin{proposition}\label{le:dual-matroid-supported-state}
If $|\psi\rangle$ is a matroid-supported state with supporting matroid $M$, then $|\psi\rangle^*$ is a matroid-supported state with supporting matroid $M^*$, the dual of $M$.
\end{proposition}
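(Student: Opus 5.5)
The plan is to compute the support of $|\psi\rangle^*$ directly from Definition~\ref{def:dual-quantum-state} and compare it with the base collection of $M^*$ given by Lemma~\ref{le:matroid-duality}.

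\textbf{Step 1: rewrite the dual state.} Write $|\psi\rangle=\sum_{S\subseteq[n]}\alpha_S|S\rangle$. By definition,
\begin{equation*}
|\psi\rangle^*=\sum_{S\subseteq[n]}\alpha_S|\overline{S}\rangle=\sum_{T\subseteq[n]}\alpha_{\overline{T}}|T\rangle .
\end{equation*}
The second equality reindexes the sum by $T=\overline{S}$. This is legitimate because complementation $S\mapsto\overline{S}$ is a bijection of $2^{[n]}$ onto itself, and it is an involution. Consequently, the coefficient of $|T\rangle$ in $|\psi\rangle^*$ is exactly $\alpha_{\overline{T}}$, and no two terms merge or cancel.

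\textbf{Step 2: read off the support.} From Step 1,
\begin{equation*}
\supp(|\psi\rangle^*)=\{T\subseteq[n] : \alpha_{\overline{T}}\neq 0\}=\{\overline{S} : S\in\supp(|\psi\rangle)\}.
\end{equation*}
By hypothesis, $\supp(|\psi\rangle)=\mc{B}(M)$. Therefore $\supp(|\psi\rangle^*)=\{[n]\backslash B : B\in\mc{B}(M)\}=\mc{B}^*$.

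\textbf{Step 3: invoke duality.} Lemma~\ref{le:matroid-duality} states that $\mc{B}^*$ is the collection of bases of the dual matroid $M^*$ on $[n]$. Note also that $X^{\otimes n}$ is unitary, so $|\psi\rangle^*$ is still a normalized state. Together with Step 2, this shows that $|\psi\rangle^*$ satisfies Definition~\ref{def:matroid-supported-state} with supporting matroid $M^*$.

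There is no real obstacle here. The only point needing care is Step 1: we must be sure that reindexing preserves the nonzero pattern of the coefficients exactly. Since complementation is a bijection, each coefficient is simply moved to a new basis state and never combined with another, so this holds.
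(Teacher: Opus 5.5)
Your proof is correct and takes essentially the same route as the paper. Both reindex by complementation, so that $\supp(|\psi\rangle^*)=\{\overline{B} : B\in\mc{B}(M)\}$, and then identify this collection as $\mc{B}(M^*)$ via Lemma~\ref{le:matroid-duality}; the only difference is that the paper reindexes before applying $X^{\otimes n}$ rather than after.
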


\begin{proof}
 Let $|\psi\rangle = \sum_{S\in\mc{B}(M)}\alpha_S|S\rangle$ be an $n$-qubit matroid-supported state, by the duality of matroids (see Lemma~\ref{le:matroid-duality}), $|\psi\rangle$ can be rewritten as
 \begin{equation}
     |\psi\rangle=\sum_{S\in\mc{B}(M*)}\alpha_{\overline{S}}|\overline{S}\rangle.
 \end{equation}
 By the definition of dual quantum state (see Definition~\ref{def:dual-quantum-state},
 \begin{equation}
     |\psi\rangle^*=X^{\otimes n}(\sum_{S\in\mc{B}(M^*)}\alpha_{\overline{S}}|\overline{S}\rangle)=\sum_{S\in\mc{B}(M^*)}\alpha_{\overline{S}}|S\rangle.
 \end{equation}
 It is straightforward to see that $|\psi\rangle^*$ is a matroid-supported state with supporting matroid $M^*$.
\end{proof}

\begin{proposition}\label{le:dual-matroid-base-state}
Let $|\psi_M\rangle$ and $|\psi_{M^*}\rangle$ are matroid-base states associated to matroids $M$ and its dual matroid $M^*$, respectively. Then $|\psi_M\rangle^*=|\psi_{M^*}\rangle$.   
\end{proposition}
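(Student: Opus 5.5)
The plan is to compute $|\psi_M\rangle^*$ directly from Definition~\ref{def:matroid-base-state} and Definition~\ref{def:dual-quantum-state}, and then re-index the sum using Lemma~\ref{le:matroid-duality}. First, write $|\psi_M\rangle=\frac{1}{\sqrt{|\mc{B}(M)|}}\sum_{B\in\mc{B}(M)}|B\rangle$. Since $X^{\otimes n}$ flips every qubit, it maps $|B\rangle$ to $|\overline{B}\rangle$. By linearity,
\begin{equation*}
|\psi_M\rangle^*=\frac{1}{\sqrt{|\mc{B}(M)|}}\sum_{B\in\mc{B}(M)}|\overline{B}\rangle .
\end{equation*}

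Next I will re-index this sum. By Lemma~\ref{le:matroid-duality}, $\mc{B}(M^*)=\{\overline{B}\mid B\in\mc{B}(M)\}$. Complementation $B\mapsto\overline{B}$ is an involution on $2^{[n]}$, so it is injective and restricts to a bijection $\mc{B}(M)\to\mc{B}(M^*)$. Two consequences follow. First, the sum equals $\sum_{B'\in\mc{B}(M^*)}|B'\rangle$, with each term appearing exactly once, so no coefficients merge or cancel. Second, $|\mc{B}(M)|=|\mc{B}(M^*)|$, so the normalization factors agree. Combining these gives $|\psi_M\rangle^*=\frac{1}{\sqrt{|\mc{B}(M^*)|}}\sum_{B'\in\mc{B}(M^*)}|B'\rangle=|\psi_{M^*}\rangle$.

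Alternatively, the result follows from Proposition~\ref{le:dual-matroid-supported-state}: it gives support $\mc{B}(M^*)$ with coefficients $\alpha_{\overline{S}}=1/\sqrt{|\mc{B}(M)|}$, which are uniform, and the normalization matches as above. There is no real obstacle here. The only point needing care is the bijectivity of complementation, which guarantees both the equal cardinalities and that the resulting superposition is uniform with the correct normalization.
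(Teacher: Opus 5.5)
Your proposal is correct and follows the paper's proof: apply $X^{\otimes n}$ termwise to get a sum over the complements $\overline{B}$, then re-index over $\mc{B}(M^*)$ using Lemma~\ref{le:matroid-duality}. You are slightly more careful than the paper, which uses $|\mc{B}(M)|=|\mc{B}(M^*)|$ without comment, whereas you justify it by the bijectivity of complementation.
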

\begin{proof}
Since $|\psi_M\rangle$ is a matroid-base state, let $|\psi_M\rangle=\frac{1}{\sqrt{|\mc{B}(M)|}}\sum_{S\in\mc{B}(M)}|S\rangle$. So
\begin{equation}
\begin{array}{rcl}
 |\psi_M\rangle^*&=&X^{\otimes n}(\frac{1}{\sqrt{|\mc{B}(M)|}}\sum_{S\in\mc{B}(M)}|S\rangle)\\
 &=&\frac{1}{\sqrt{|\mc{B}(M)|}}\sum_{S\in\mc{B}(M)}|\overline{S}\rangle\\
 &=&\frac{1}{\sqrt{|\mc{B}(M^*)|}}\sum_{S\in\mc{B}(M^*)}|S\rangle\\
 &=&|\psi_{M^*}\rangle.
\end{array}
\end{equation}
\end{proof}

\section{Conclusions and discussions}\label{conclusion}
\subsection{Conclusions}
In this work, we investigate quantum states endowed with matroid structures. We link quantum entanglement to matroid connectivity, offering a combinatorial way to characterize entanglement. For general matroid-supported states, a connected matroid guarantees entanglement; for matroid-base states, connectivity is both necessary and sufficient. We also connect local $Z$-basis measurements to matroid operations: result 0 acts as deletion, result 1 as contraction. Thus, measurements yield matroid minors, and quantum state duality mirrors matroid duality. These parallels reveal deep structural symmetry and give a unified lens for analyzing quantum states. Overall, matroid theory offers a powerful conceptual tool for understanding entanglement, measurement, and duality in quantum systems.

\subsection{Discussion}
Since matroids have a wide range of applications, matroid-supported states possess rich combinatorial structures that can be exploited in quantum algorithm design, combinatorial optimization, and the quantum simulation of discrete systems. 
Although we have not yet found concrete applications of matroid supported states, the following two issues warrant further investigation: 

\begin{enumerate}
\item \textbf{Matroid-supported states preparation}.

Quantum state preparation is a fundamental task in quantum information and quantum computing~\cite{shende2005synthesis,sun2023asymptotically,yuan2023optimal,zhang2022quantum,li2025nearly}. 
An intriguing question in this context is whether {matroid-supported quantum states} can be prepared efficiently. Some special cases of this problem are already well understood.  For instance, {Dicke states}, which correspond to {uniform matroids}, have been extensively studied due to their broad applications in quantum information processing, and several efficient preparation methods are known~\cite{bartschi2019deterministic,yuan2025depth}. Based on this, the class of
matroid-base states  supported by partition matroids \cite{recski1973partitional} (an important class of matriods) can be efficiently prepared by constructing tensor products of Dicke states, since  partition matroids are direct sums of uniform matroids.

A particularly relevant family of states is that of {Hamming-weight-preserving states}~\cite{raveh2024deterministic,farias2025quantum,mao2024towards}, in which all computational basis components share the same Hamming weight $k$. The family of {matroid-supported states} naturally arises as a subclass of these states. Efficient preparation of Hamming-weight-preserving states has recently attracted considerable attention \cite{li2025preparation,luo2025optimal}.
An interesting problem, however, is whether the intrinsic structural properties of matroids---such as the basis-exchange axiom---can be further exploited to reduce the circuit depth, gate count, or ancillary qubit number of the quantum circuit for preparing such a state. Addressing this problem would be highly valuable, particularly for realizing quantum walks on matroid basis-exchange graphs.

\item \textbf{Quantum walks on the basis-exchange graphs of matroids}.
Quantum walks provide a general framework for quantum algorithm design, serving as the quantum analogue of classical random walks through the use of superposition, interference, and entanglement~\cite{aharonov2001quantum,ambainis2003quantum,venegas2012quantum,li2026deterministic, wang2025unifying, li2025unbounded, li2025derandomization, xu2022robust}.
%(Aharonov, Ambainis, Kempe and Vazirani~\cite{aharonov2001quantum}, Ambainis~\cite{ambainis2003quantum} and Venegas-Andraca~\cite{venegas2012quantum}). 
They have been shown to yield exponential speedups for certain computational problems~\cite{childs2003exponential,li2024recovering}.
%(Childs, Cleve, Deotto, Farhi, Gutmann and Spielman~\cite{childs2003exponential} and Li, Li and Luo~\cite{li2024recovering}). 

On the classical side, a random walk on a matroid basis-exchange graph defines a Markov process on the set of all bases of a matroid, where edges connect pairs of bases that differ by a single element exchange, in accordance with the matroid basis-exchange axiom. This graph-theoretic formulation captures the combinatorial geometry of matroids and provides a natural foundation for algorithmic analysis. Such walks have found wide applications, including approximate uniform sampling of bases, counting the total number of bases, and estimating network reliability~\cite{feder1992balanced,anari2018log,anari2019log,anari2020isotropy,chen2024near}.

Consequently, quantum walks on the matroid basis-exchange graphs constitute a promising research direction, offering both theoretical richness and significant potential for quantum algorithmic applications.     
\end{enumerate}

\begin{acknowledgments}
This work was supported by the National Key Research and Development Program of China (Grant No.2024YFB4504004), the National Natural Science Foundation of China (Grant No. 92465202, 62272492), the Guangdong Provincial Quantum Science Strategic Initiative (Grant No. GDZX2303007, GDZX2503001), the Guangzhou Science and Technology Program (Grant No. 2024A04J4892).
\end{acknowledgments}

\bibliography{refs}% Produces the bibliography via BibTeX.

\end{document}